\documentclass[aps,pra,twocolumn,groupedaddress]{revtex4-1}
\usepackage{amsfonts,amsthm}
\usepackage{url}
\usepackage{amssymb}
\usepackage{graphicx}
\usepackage{color}
\usepackage{amsmath}
\newcommand{\bip}[2]{\ensuremath{K_{#1,#2}}}
\newcommand{\incbip}[2]{\ensuremath{\widetilde{K}_{#1,#2}}}
\newcommand{\g}[1]{\ensuremath{\mathcal{#1}}}
\newcommand{\Kn}[1]{\ensuremath{K_{#1}}}
\newcommand{\Ch}[3]{\ensuremath{\text{Chimera}(#1,#2,#3)}}

\newtheorem{example}{Example}
\newtheorem{theorem}{Theorem}
\newtheorem{lemma}[theorem]{Lemma}
\newtheorem{corollary}[theorem]{Corollary}

\begin{document}

\title{Identifying the minor set cover of dense connected bipartite graphs via random matching edge sets}

\author{Kathleen E. Hamilton}
\email{hamiltonke@ornl.gov}
\author{Travis S. Humble}

\affiliation{Quantum Computing Institute, Oak Ridge National Laboratory, Tennessee, 37821, USA}

\begin{abstract}
Using quantum annealing to solve an optimization problem requires minor embeddings of a logic graph into a known hardware graph. In an effort to reduce the complexity of the minor embedding problem, we introduce the minor set cover (MSC) of a known graph \g{G}: a subset of graph minors which contain any remaining minor of the graph as a subgraph. Any graph that can be embedded into \g{G} will be embeddable into a member of the MSC. Focusing on embedding into the hardware graph of commercially available quantum annealers, we establish the MSC for a particular known virtual hardware, which is a complete bipartite graph. We show that the complete bipartite graph \bip{N}{N} has a MSC of $N$ minors, from which \Kn{N+1} is identified as the largest clique minor of \bip{N}{N}. The case of determining the largest clique minor of hardware with faults is briefly discussed but remains an open question.

\end{abstract}
\keywords{minor embedding, adiabatic quantum computing, quantum annealing, clique minor, graph theory}

\maketitle
\section{Introduction}
Adiabatic quantum computation uses a continuous-time process to evolve the state of a quantum register \cite{farhi2001quantum}. Whereas the register elements are represented by quantum physical subsystems that can store qubits of information, the continuous-time evolution depends explicitly on a Hamiltonian that defines the interactions between register elements \cite{kaminsky2004scalable}. An ideal Hamiltonian may allow for arbitrary interactions between elements, but physical and technological limitations often prevent fabrication of arbitrary interactions or forms of connections in actual devices. A prominent example is found in the quantum annealer developed by D-Wave Systems, Inc. \cite{Johnson2011}, which uses a well-defined hardware connectivity graph called the Chimera lattice and implements problems which can be described using the Ising Hamiltonian with two-body interactions. 

The problem of expressing an arbitrary Hamiltonian in the presence of limited connections poses a prominent concern for quantum annealing applications \cite{humble2014integrated}. Presently we consider the input to this programming process to be a well-defined logical Ising Hamiltonian.  The logical Ising model is known to capture a broad class of different problems and may also be presented in quadratic unconstrained binary optimization (QUBO) form \cite{lucas2014ising}.  Choi originally formulated the process of programming a logical problem as graph minor embedding, in which a graph representing the dependency of the input Hamiltonian is mapped into the targeted hardware graph \cite{choi2008minor,choi2011minor}.
In general, graph minor embedding requires each logical vertex to be mapped into a connected subtree of the hardware graph, and Choi's TRIAD algorithm yielded a deterministic method for embedding a complete graph into the Chimera lattice. There are many approaches for determining if a graph can be embedded into another graph, including: identifying useful graph qualities\cite{grigoriev2007algorithms,coury2007embedding,fomin2014preprocessing}, establishing a set of known embeddings as a ``lookup table'' \cite{klymko2014maximal,zaribafiyan2016systematic} or other heuristic methods \cite{cai2014minor,boothby2016fast}.  

More recently, Cai, Maccready, and Roy (CMR) have presented a randomized algorithm for generating an embedding \cite{cai2014minor}. Their approach is based on employing Djikstra's algorithm to find the shortest-path between randomly mapped logical vertices. The CMR algorithm has proven useful for embedding arbitrary input graphs in current hardware because it can find smaller embeddings than using either the TRIAD algorithm \cite{choi2011minor} or maximal minor embedding \cite{klymko2014maximal}. However, the method is not guaranteed to succeed and has a worst case complexity that scales as $O(n^9)$ with the input graph order $n$ (though average case behavior appears to be $O(n^3)$). The CMR embedding algorithm represents a significant portion of the time needed for a quantum annealing workflow, and for even modest problem sizes it can far exceed the time required for executing a quantum annealing schedule \cite{Humble2016APDCM}.  Problem instances represented by large but incompletely connected input graphs must use embeddings that are both resource efficient and time efficient in order to ensure fast and correct solutions. Examples include dynamic job scheduling \cite{venturelli2015scheduling} and route planning \cite{rieffel2015case}, as well as time-dependent fault-detection \cite{perdomo2015quantum}.
Alleviating the classical processing bottleneck while retaining the resource efficiency of the CMR algorithm is therefore an important problem for solving optimization problems with quantum annealing and integrating these quantum processing units into future computing systems \cite{Britt2015}.

In this contribution, we present a quasi-deterministic method for graph minor embedding that takes advantage of a virtual hardware abstraction. 
Our method builds upon two recent embedding concepts: ``maximal minor embedding'' developed in \cite{klymko2014maximal}, which characterizes finding the most efficient embedding with respect to the minimal number of hardware nodes used in the embedded graph; and recent ideas developed by Goodrich and collaborators\footnote{\label{note1}Work presented at the Society for Industrial and Applied Mathematics Workshop on Network Science 2016, manuscript in progress.} that uses the complete bipartite (biclique) minor of the Chimera graph as a virtual hardware (as shown in Fig.~\ref{fig:K13_embedding}).  The full embedding of a problem graph in the hardware graph is found by first embedding into a chosen virtual hardware or one of its minors. Our choice of a bipartite virtual representation for the hardware is motivated in part by the simplicity of the structure as well as its balance between size and order of the virtual representation, and additionally for its connection to associative memory recall and other variants of machine learning applications \cite{seddiqi2014adiabatic}. We note that alternative virtual representations are equally valid, e.g., a square grid. 

Klymko et al.\cite{klymko2014maximal} established tight bounds on the largest complete graph that can be minor embedded in a Chimera graph as well as demonstrating methods for embedding into faulty hardware graphs and introduced the concept of ``maximal minor'' embedding. We rename the set of ``maximal minors'' as the \textit{minor set cover} (MSC): this is the set of minors for a given graph \g{G} where any subgraph or minor of \g{G} will either be a member of the MSC or is a subgraph contained in one of the members. The MSC of the biclique virtual hardware is a finite set of embeddable graph minors which can be precomputed without reference to the input problem, and can act as a lookup table. This reframes the problem of graph minor embedding as a subgraph isomorphism search: an input graph is compared against each key, if it is found to be contained in a given key, the problem graph is now embedded into a member of the MSC, and then an embedding in the hardware is found. It has been proposed previously that reducing the problem complexity of graph embedding from graph homomorphism to subgraph isomorphism can lead to substantial speedup in processing time \cite{HELL199092,cygan2016tight}.

The question of whether a graph is minor-embeddable has been explored through many different approaches, such as: forbidden minors and extremal graph theory.  Forbidden minors are minors which a class of graphs is known to exclude \cite{cera2007graphs}: Wagner's theorem established that planar graphs cannot contain \Kn{5} or \bip{3}{3} minors \cite{Wagner1937}, while the works of Robertson and Seymour \cite{ROBERTSON198339,ROBERTSON198692,ROBERTSON1990255,Robertson200343,Robertson2004325} develop the theory of forbidden minors for planar and non-planar graphs (see also the review in Ref.~\cite{Kawarabayashi2007}). Extremal graph theory identifies what classes of graphs, or graph qualities, ensure certain minors are contained by a graph \cite{song2006extremal,Fountoulakis2007,Fountoulakis2009,joret2013complete,eppstein2013grid}. We will show that the final minor in the MSC of a complete bipartite graph is always the \Kn{N+1} graph, and in establishing the robustness of this minor for more general bipartite graphs, we turn to research on the development of theorems for the existence of complete minors: \cite{Kostochka1984,thomason2001extremal,diestel2004dense,thomason2008disjoint,bohme2009linear}. However the class of bipartite graphs under consideration are not particularly sparse, nor are they random, of large order, size, girth or degree. In focusing on the \Kn{N+1} minor, we are searching for conditions that ensures a graph has the largest possible complete minor. 

Currently our method for constructing the MSC of a graph is only applied to the case of complete or near-complete bipartite graphs and in this paper we focus on covering non-planar bipartite graphs which are undirected and contain no multiple edges or self-loops. This excludes several cases of bipartite graphs which have a trivial MSC that only contains the original graph. For example, simple paths, cycles, and the star graph $\g{S}_n$. We also identify leaves (terminal vertices) as graph edges which do not contribute to the formation of a member of the MSC. 

We derive the necessary requirements to build an edge set which generates a MSC beginning with the simplest case of the complete (fully connected) bipartite graph (Sec.~\ref{sec: c-c complete bipartite}). A complete bipartite graph \bip{N}{N} which has minimum partition order $N$ has a MSC which contains the complete graph \Kn{N+1}. This is the largest complete graph that can be embedded into the original graph as no other set of edge contractions can lead to a minor with a completely disconnected graph complement. For the \Ch{n}{n}{c} hardware graph,the MSC identifies \Kn{nc+1} as the largest complete graph which can be minor embedded into the virtual hardware \bip{nc}{nc}. This agrees with results found by Choi \cite{choi2011minor} and results given in  \cite{klymko2014maximal} based on treewidth arguments. 

Incomplete bipartite graphs are those graphs which are missing edges between partitions. We discuss the case of a complete bipartite graph missing a small number of edges in Sec.~\ref{sec:incomplete bipartite graphs}, and focus on the \Kn{N+1} minor robustness on a general incomplete bipartite graph. Criteria are derived which identify bipartite graphs of minimum partition order $N$ which lack the minor \Kn{N+1}.

\section{\label{sec:defs}Definitions and notations}

A graph $\g{G} = \g{G}(V,E)$ is defined by a vertex set $\lbrace v_i \rbrace \equiv V$ and an edge set $\lbrace e_{ij} \rbrace \equiv E$. 
In this paper we only consider simple, undirected graphs: the edges have no orientation (the edge $e_{ij} = (x_i, x_j) $ is equivalent to the edge $e_{ji} = (x_j,x_i)$), and multiple edges and self-loops are not allowed. When counting the degree of a vertex set resulting from the contraction of an edge $e_{ij}$, the \textit{in-degree} of vertex $x_i$ counts all edges which connect to $x_i$ but excludes the edge $e_{ij}$, and the \textit{out-degree} of vertex $x_j$ counts all edges which connect to $x_j$ but excludes the edge $e_{ij}$. 

For any graph \g{G} on $n$ vertices, the complement graph $\g{G}^{c}$ is also defined on the vertex set $V(\g{G})$ and contains an edge $\overline{e}_{ij}$ only if $\overline{e}_{ij}$ does not exist on \g{G}. Thus the union of \g{G} and its complement $\g{G}^{c}$ form the complete graph on $V(\g{G})$. For example, the complete graph $\g{G} = \Kn{n}$ has a complement of order $|V(\g{G}^{c})|= |V(\g{G})|=n$ vertices but $|E(\g{G}^{c})|= 0$ edges. 

A graph $\g{G^{\prime}}$ is \textit{minor embeddable} in \g{G} if for each vertex $v$ of a graph $\g{G^{\prime}}$ a mapping $\phi_{\g{G}}(v)$ can be found which takes $v$ to a connected subtree of \g{G}. The individual vertex sets of $\phi_{\g{G}}(v)$ do not overlap, and $\phi_{\g{G}}(x), \phi_{\g{G}}(y)$ are adjacent if there exists vertices $x_i \in \phi_{\g{G}}(x)$ and $y_i \in \phi_{\g{G}}(y)$ which are adjacent on \g{G}. The order of a given subtree is the number of vertices it contains. An isomorphic embedding maps each vertex $v$ to a single vertex of \g{G} (i.e. $\phi_{\g{G}}(v) = v$), i.e. each vertex is mapped to a subtree of order $1$. 

Generating a minor (\g{M}) from a given graph \g{G} is done by edge contraction or edge removal, and \g{G} may have a large number of minors. For finite graphs, there exists a set of graph minors $\g{M} \equiv \lbrace \g{M}^{(i)} \rbrace $ which we define as the MSC. The minors $\g{M}^{(i)} \in \g{M}$ are unique in that they are not isomorphically embeddable in the original graph, nor are they contained in any other minor as a subgraph. As a result, the set \g{M} covers the entire set of minors (i.e for any minor $m$ of \g{G}, $m$ is either a member of the MSC or is contained in a minor of the set cover as a subgraph). By definition any minor formed by edge deletion cannot be a included in the MSC. When searching for possible members of the MSC only minors formed by edge contraction are considered.

This work focuses on the MSC for bipartite graphs (\bip{N}{N^{\prime}}). A bipartite graph is a graph with a vertex set which can be partitioned into two non-overlapping subsets: $v_{a} \subset V(\g{G}), v_{b} \subset V(\g{G}), v_{a} \cup v_{b} = V(\g{G}), v_{a} \cap v_{b} = \varnothing$. Complete bipartite graphs \bip{N}{N^{\prime}} are those graphs of size $|E(\bip{N}{N^{\prime}})| = N N^{\prime}$, with all edges existing between vertices in different partitions. Incomplete bipartite graphs \incbip{N}{N^{\prime}} are graphs with missing edges between partitions, $|E(\incbip{N}{N^{\prime}})| < N N^{\prime}$. On a complete bipartite graph, all vertices in a partition have the same degree, which is equal to the order of the other partition. For the case of incomplete bipartite graphs, we define vertices which are not fully connected to the opposite partition as \textit{incomplete vertices}. 

A subset of the edge set of a graph, is called an \textit{edge matching} if all edges are non-adjacent (do not share a vertex). A perfect matching is an edge set which leaves no vertex of the graph uncoupled. The size of the perfect matching set for a complete bipartite graph \bip{N}{N} is $N$ and is an upper bound for the size of the perfect matching on an incomplete bipartite graph \incbip{N}{N}. For a complete bipartite graph with unequal partition orders, \bip{N}{N^{\prime}}, the size of the perfect matching set is $\min(N,N^{\prime})$. 

\section{\label{sec:embedding in bipartite virtual hardware} Minor embedding in the ideal Chimera hardware graph}

The quantum annealer from D-Wave Systems, Inc.\ uses a lattice of coupled superconducting flux qubits. The topology for the connections and interactions between the qubits is represented by a hardware graph referred to as the \Ch{n}{m}{c} graph \cite{url_DWAVE,Johnson2011,Dickson2013,PhysRevX.4.021008}. This graph has a fixed topology: it is an $n \times m$ square lattice of \bip{c}{c} unit cells with intercell connections column-wise between left bipartite partitions and row-wise between right bipartite partitions (see \Ch{3}{3}{4} in Fig.~\ref{fig:chimera}). The \Ch{n}{m}{c} is bipartite, and the largest biclique is the unit cell \bip{c}{c}. Few problems can be isomorphically embedded into the D-Wave processor. In general, optimizing a given logical Hamiltonian requires the use of minor embedding into the hardware graph, which creates a significant bottleneck in the quantum annealing workflow. 

Rather than enumerate the entire MSC of the Chimera graph, we use an intermediate embedding step to construct a virtual hardware graph of the original hardware graph. We assume the Chimera graph is ideal, with all qubits operational (no hard faults). Contracting all intercell connections on a \Ch{n}{m}{c} hardware results in a virtual hardware\ref{note1} which is a complete bipartite graph \bip{nc}{mc} (see Fig.~\ref{fig:chimera}). For this virtual hardware, we then construct the MSC. The remainder of this paper studies the construction of the MSC and establishes that the largest complete graph embeddable in a \bip{nc}{mc} virtual hardware is \Kn{\min{(n,m)}c+1}.  These results agree with lemmas presented in Ref.~\cite{klymko2014maximal} which showed the treewidths of \bip{c}{c} and \Kn{c} coincide, $\tau\left(\bip{c}{c}\right) = \tau\left(\Kn{c+1}\right) = c$. 
\begin{figure}[htbp]
  \includegraphics[width=0.9\columnwidth]{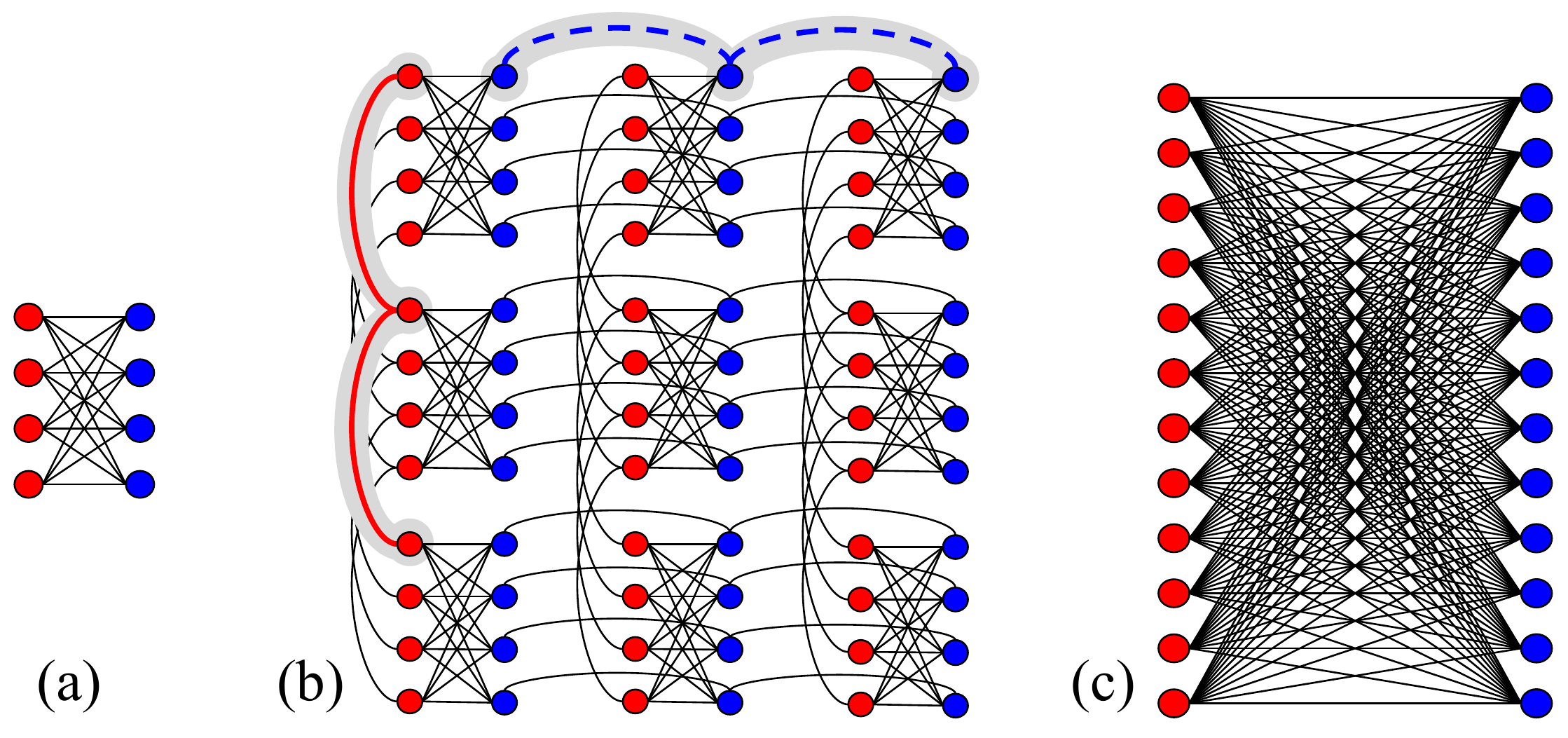}
  \caption{[Color online] The Chimera graph and virtual hardware construction: (a) the unit cell of a Chimera graph is the \bip{4}{4} graph, (b) the $3 \times 3$ grid of unit cells for \Ch{3}{3}{4}, vertex bags are defined along each row or column of intercell connections (highlighted in grey), (c) contraction along all edges in a bag results in the \bip{12}{12} virtual hardware. The left partition consists of vertex bags with $3$ physical qubits, formed by contracting vertical intercell connections (red), the right partition consists of vertex bags with $3$ physical qubits, formed by contracting all horizontal intercell connections (blue, dashed)}
  \label{fig:chimera}
\end{figure}
The approach of Klymko et al.\ was to use the maximal minor set of the Chimera unit cell to define an iterative embedding process. In this work we consider a two-step embedding procedure which first generates a complete bipartite virtual hardware then identifies the MSC of the virtual hardware. An embeddable complete graph \Kn{n} is embedded into the virtual hardware through an isomorphic mapping into (at least) one of the minors in the minor set.  

For an ideal \Ch{n}{n}{c} the two-step embedding and the Klymko embedding identify the largest embeddable complete graph as \Kn{nc+1}. Generalizing to a \Ch{n}{m}{c} hardware, the largest embeddable graph is \Kn{\min{(n,m)c}+1}. Using the two-step embedding results in a final embedding for \Kn{\min{(n,m)c}+1} which embeds each logical qubit into a chain of length $m$, $n$ or length $m+n$. With the MSC construction, the embedding of \Kn{nc+1} into a Chimera hardware graph with $n\times n$ square grid of \bip{c}{c} unit cells will require $2$ chains of order $n$ and $cn-1$ chains of order $2n$.  The distribution of chain lengths for the embedding of \Kn{13}  on a $3 \times 3 $ square Chimera is: $2$ chains of order $\ell = m = 3$ and $11$ chains of order $\ell=2m = 6$. Comparison to the embeddings shown in Ref. \cite{klymko2014maximal}, where the maximal minor set of the Chimera unit cell was used to define an iterative embedding process, shows that the two-step embedding finds the same embeddings for \Kn{13} and \Kn{17} as the iterative minor extension embeddings. 
\begin{figure}[htbp]
  \includegraphics[width=1.00\columnwidth]{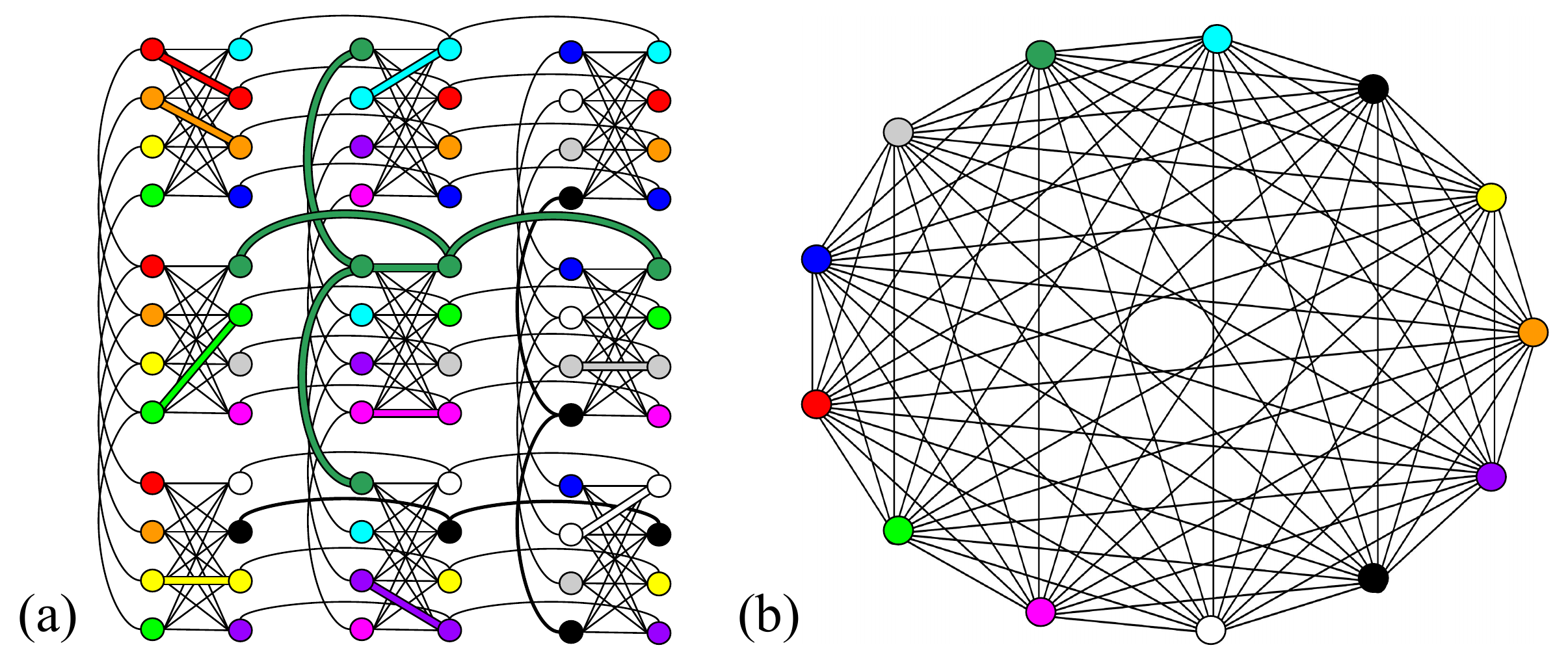}
  \caption{[Color online] Example of embedding \Kn{13} in the \Ch{3}{3}{4} virtual hardware construction:  (a) the embedding of a logical qubit into $6$ physical qubits is highlighted (dark green), (b) of the 13 vertices on \Kn{13}, $11$ are chains of order $6$ (various colors) while $2$ are chains of order $3$ (black)}
  \label{fig:K13_embedding}
\end{figure}

In Ref.~\cite{choi2011minor}, Choi gave a lower bound of $\lceil \frac{n-3}{d-2} \rceil$ on the minimum number of qubits needed to embed each logical qubit, and a lower bound on the total number of qubits needed to embed a graph of order $n$: $\Omega (n^2/d)$ qubits, where $d=c+2$ is the number of couplings per physical qubit. Our embedding on \Ch{n}{n}{4} for \Kn{4n+1}, by embedding each logical qubit into ether $n$ or $2n$ physical qubits, saturates the lower bound $\lceil \frac{nc-3}{d-2} \rceil = \lceil \frac{4n - 3}{4}\rceil = n$. The ideal \Ch{12}{12}{4} hardware contains $1152$ qubits, while the minimum number of qubits needed to embed \Kn{49} is 400 qubits. In this work we show that by embedding into the \bip{48}{48} virtual hardware, the largest embeddable clique is \Kn{49}.

The remainder of this section is centered around proving the following theorem:
\begin{theorem}
On the ideal \Ch{n}{m}{c} hardware ($n \times m$ grid of \bip{c}{c} unit cells) the largest complete graph which can be embedded through the two-step process is \Kn{d_{min}c+1}, where $d_{min} = \min{(n,m)}$.
\end{theorem}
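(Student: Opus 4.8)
The plan is to reduce the theorem to a statement about clique minors of the virtual hardware alone, and then close it with one explicit construction for the lower bound and a treewidth obstruction for the upper bound. By the construction of the two-step process, step~1 contracts every intercell coupling of \Ch{n}{m}{c}, which (as noted above) yields the virtual hardware \bip{nc}{mc}; step~2 embeds the logical graph isomorphically into a member of the MSC of \bip{nc}{mc}; and step~3 undoes the contractions, turning that member into an embedding in the physical Chimera graph. Since the MSC is defined so that every minor of a graph is a subgraph of some MSC member and, conversely, every MSC member is a minor, a complete graph \Kn{p} is embeddable through the two-step process if and only if \Kn{p} is a minor of \bip{nc}{mc}. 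So it suffices to show that the largest clique minor of \bip{nc}{mc} is \Kn{d_{min}c+1}; write $N := d_{min}c = \min(nc,mc)$ and, without loss of generality, $n \le m$ so $N = nc$.

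For the lower bound, note that \bip{nc}{mc} contains \bip{N}{N} as a subgraph (keep only $N$ vertices of the larger side). The result stated above for balanced complete bipartite graphs places \Kn{N+1} in the MSC of \bip{N}{N}, so \Kn{N+1} is a minor of \bip{N}{N} and hence of \bip{nc}{mc}; concretely, take $N-1$ edges of a perfect matching as two-vertex branch sets $\{a_i,b_i\}$ and the two leftover vertices as singleton branch sets, giving $N+1$ pairwise-adjacent branch sets. Undoing the intercell contractions lifts each virtual vertex back to a column (length $n$) or a row (length $m$) of physical qubits, which reproduces the chain-length distribution quoted above. Hence the two-step process does embed \Kn{d_{min}c+1}.

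For the upper bound I would invoke minor-monotonicity of treewidth: if $H$ is a minor of $G$ then $\tau(H) \le \tau(G)$. We have $\tau(\bip{nc}{mc}) = \min(nc,mc) = N$; for the balanced case this is the lemma of Klymko et al.\ ($\tau(\bip{p}{p}) = p$), and the unbalanced case follows since \bip{p}{q} contains \bip{p}{p} (so $\tau \ge p$) while its $p$ smaller-side vertices form the skeleton of a width-$p$ tree decomposition. Because $\tau(\Kn{p}) = p-1$, any clique minor \Kn{p} of \bip{nc}{mc} obeys $p - 1 \le N$, i.e.\ $p \le N+1 = d_{min}c+1$. Combined with the previous paragraph, the largest complete graph embeddable through the two-step process is exactly \Kn{d_{min}c+1}.

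I expect the main obstacle to be not either bound but the bookkeeping in the reduction: one must carefully match the informal two-step procedure to the clean criterion ``\Kn{p} embeddable $\iff$ \Kn{p} is a minor of \bip{nc}{mc},'' which requires invoking the MSC definition in both directions and checking that a minor of the virtual hardware lifts to a genuine Chimera embedding — true because the virtual hardware is itself a contraction of Chimera, so the two contraction patterns compose. A secondary point of care is the unbalanced case $n \ne m$: the later sections develop the MSC only for \bip{N}{N}, so one must be sure (as the treewidth bound shows for free) that the surplus vertices on the larger side of \bip{nc}{mc} cannot be exploited to push the clique minor past \Kn{N+1}.
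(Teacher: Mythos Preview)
Your argument is correct, but it is considerably more complete than what the paper actually writes at this point.  The paper's ``proof'' of Theorem~1 is only a two-sentence forward reference: it observes that the virtual hardware is \bip{nc}{mc} and announces that the MSC construction to follow (Section~\ref{sec: c-c complete bipartite}, culminating in Theorem~\ref{cor:maximum clique minor} and its corollary for unequal partitions) will show that the final MSC minor is \Kn{d_{\min}c+1}.  The real work --- both the lower-bound construction via a non-adjacent edge set and the upper-bound claim that no larger clique arises --- is deferred there, and the upper bound is argued by exhaustion of the MSC (``\Kn{N+2} cannot exist or it would be in the MSC'').

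Your route differs mainly in the upper bound: instead of relying on the MSC enumeration being complete, you invoke minor-monotonicity of treewidth together with $\tau(\bip{nc}{mc})=\min(nc,mc)$ and $\tau(\Kn{p})=p-1$.  This is a genuinely independent obstruction; the paper does use exactly these treewidth values, but only later (Theorem~\ref{th: bounded treewidth}) to show all MSC members share the same treewidth, not to cap the clique minor.  Your version has the advantage of handling the unbalanced case $n\neq m$ cleanly without any extra MSC machinery, and of not depending on the correctness of the MSC enumeration.  The paper's approach, on the other hand, ties the theorem directly to the MSC structure it is trying to advertise, and yields the explicit chain-length description (two chains of order $n$, the rest of order $2n$) as a by-product.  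Your reduction step --- making precise that two-step embeddability is equivalent to being a minor of the virtual hardware --- is also stated more carefully than anything in the paper.
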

\begin{proof}
Through the construction of a bipartite virtual hardware, each physical qubit is contained in a virtual qubit, which is a chain of length $n$. It will be shown in the construction of the MSC that each logical qubit eventually is embedded into a single virtual qubit (final embedding into a subtree of order $n$) or at most a pair of virtual qubits (final embedding into a subtree of order $2n$). 
\end{proof}
\begin{lemma}
On the ideal \Ch{n}{n}{c} hardware, embedding the complete graph \Kn{nc+1} will have $nc-1$ subtrees of order $2n$ and $2$ subtrees of order $n$.
\end{lemma}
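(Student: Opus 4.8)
The plan is to make the two-step embedding explicit and then count the vertices in each subtree. First I would fix notation for the virtual hardware: contracting every intercell connection of $\Ch{n}{n}{c}$ collapses each column of left-partition qubits and each row of right-partition qubits to a single virtual qubit, so (the grid being $n\times n$) every virtual qubit is the image of a path on exactly $n$ physical qubits, and the virtual graph is $\bip{nc}{nc}$ with left part $L=\{l_1,\dots,l_{nc}\}$ (the column chains) and right part $R=\{r_1,\dots,r_{nc}\}$ (the row chains). Because every column chain meets every row chain in exactly one unit cell, and each unit cell is a $\bip{c}{c}$, any pair $l_i,r_j$ is joined in $\Ch{n}{n}{c}$ by exactly one physical edge, which is the realization of the virtual edge $l_i r_j$. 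I would also note that the column chains are pairwise vertex-disjoint, the row chains are pairwise vertex-disjoint, and together they exhaust all $2cn^{2}$ physical qubits.

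Next I would locate $\Kn{nc+1}$ inside $\bip{nc}{nc}$, writing $N=nc$. As established above, $\Kn{N+1}$ is a minor of $\bip{N}{N}$, so it has $N+1$ pairwise-adjacent subtrees there, and I would show that their size profile is forced. A subtree contained entirely in $L$ is nonadjacent to any other subtree contained in $L$, so at most one subtree lies in $L$ and at most one in $R$; hence at least $N-1$ of the $N+1$ subtrees are mixed, each using at least one vertex of $L$ and at least one of $R$. Counting the $N$ available vertices of $L$, and symmetrically of $R$, then pins the profile down uniquely: exactly one subtree is a single vertex of $L$, exactly one is a single vertex of $R$, and the remaining $N-1$ are matched pairs $\{l_i,r_i\}$, i.e.\ single virtual edges. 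This is precisely the $\Kn{N+1}$ minor produced by the MSC construction, and the count shows that no other branch-set profile can occur.

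Then I would pull this embedding back through the intercell contraction. A virtual-order-$1$ subtree $\{l_N\}$ (or $\{r_N\}$) expands to the column (resp.\ row) chain it came from, a subtree of order $n$. A virtual-order-$2$ subtree $\{l_i,r_i\}$ expands to the disjoint union of the $n$-vertex column chain of $l_i$ and the $n$-vertex row chain of $r_i$ joined by the single physical edge realizing $l_i r_i$; this induced subgraph has $2n$ vertices and $2n-1$ edges and is connected, hence is a subtree of order $2n$. I would then check that the expansion is still a valid embedding of $\Kn{N+1}$ into $\Ch{n}{n}{c}$: $\{l_i,r_i\}$ and $\{l_j,r_j\}$ stay adjacent through the physical edge realizing $l_i r_j$, each $\{l_i,r_i\}$ is adjacent to $\{l_N\}$ through the edge realizing $l_N r_i$ and to $\{r_N\}$ through the edge realizing $l_i r_N$, and $\{l_N\}$ is adjacent to $\{r_N\}$ through the edge realizing $l_N r_N$; the subtrees remain pairwise vertex-disjoint because they come from disjoint chains. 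Counting subtrees by type gives $N-1=nc-1$ of order $2n$ and $2$ of order $n$ (and as a check $(nc-1)\cdot 2n+2\cdot n = 2cn^{2}$, so this embedding occupies the whole lattice), which is the claim.

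There is no serious obstacle here: the one step beyond routine bookkeeping is the short counting argument that forces every $\Kn{N+1}$ minor of $\bip{N}{N}$ to decompose into $N-1$ edges and two singletons, and this uses only the existence of the minor, not the internal details of the MSC construction. The easiest thing to overlook is that an expanded order-$2n$ subtree must be connected in the Chimera graph itself, not merely in the virtual graph; this holds because each unit cell is a complete bipartite graph, so the physical edge joining a given column chain to a given row chain is always present. The subtrees are in fact trees --- two paths joined by one edge --- though in general not paths, so the word ``subtree'' in the statement is literally accurate.
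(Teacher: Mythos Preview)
Your proof is correct. The paper does not give an explicit proof of this lemma; it is stated immediately after Theorem~1 and is left to follow from the MSC construction of Section~\ref{sec: c-c complete bipartite}, where $\Kn{N+1}$ arises from $\bip{N}{N}$ by contracting a matching of size $N-1$, so that the branch sets are automatically $N-1$ matched pairs and two leftover singletons (one per partition), which then pull back to $nc-1$ chains of order $2n$ and two chains of order $n$. Your argument recovers this, but goes further: the pigeonhole step on $L$ and $R$ shows that \emph{every} $\Kn{N+1}$ minor of $\bip{N}{N}$ is forced to have exactly this branch-set profile, so the chain-length distribution is not merely achieved by the two-step construction but is the unique possibility for any two-step embedding. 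That uniqueness is not asserted in the paper, and it comes essentially for free from your counting; the paper's implicit route is shorter only because it exhibits one embedding rather than characterising all of them.
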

The dimensions of the quantum hardware can be used to minimize the order of the vertex subtrees. A graph \Kn{N+1} can be embedded in \Ch{n^{\prime}}{n^{\prime}}{c^{\prime}} hardware if $n^{\prime}c^{\prime} \geq N$ but the order of each subtree is only dependent on $n^{\prime}$. For example, \Kn{41} can be embedded in a \Ch{10}{10}{4} hardware with maximum embedding subtree of order $20$, or can be embedded in \Ch{2}{2}{20} hardware with maximum embedding subtree of order of $4$. 

\section{\label{sec: c-c complete bipartite} MSC of complete bipartite graphs}
The MSC of a complete bipartite graph \bip{N}{N} contains exactly $N$ minors, and creates a graph sequence which converges to a complete graph $\Kn{N}^{\prime}$, ($N^{\prime}<2N$). The edges which one must contract in order to form minors of the MSC are a set of $(N-1)$ non-adjacent edges. We prove our construction of the MSC, and the identification of the set of edges to contract in terms of an edge matching set and the graph complement of the complete bipartite graph. 

Based on our definition of the MSC as the minors which covers all possible minors which can be constructed from a graph, we expect each set member to be the densest connected graph on $N^{\prime}$ vertices. As each MSC minor is denser than the previous minor, the corresponding sequence of graph complements will become sparsely connected. The final minor in the set cover of a complete bipartite graph \bip{c}{c} is found to be the complete graph \Kn{c+1} whose complement graph is of order $(c+1)$ but size $0$.
\begin{figure}[htbp]
  \includegraphics[width=0.9\columnwidth]{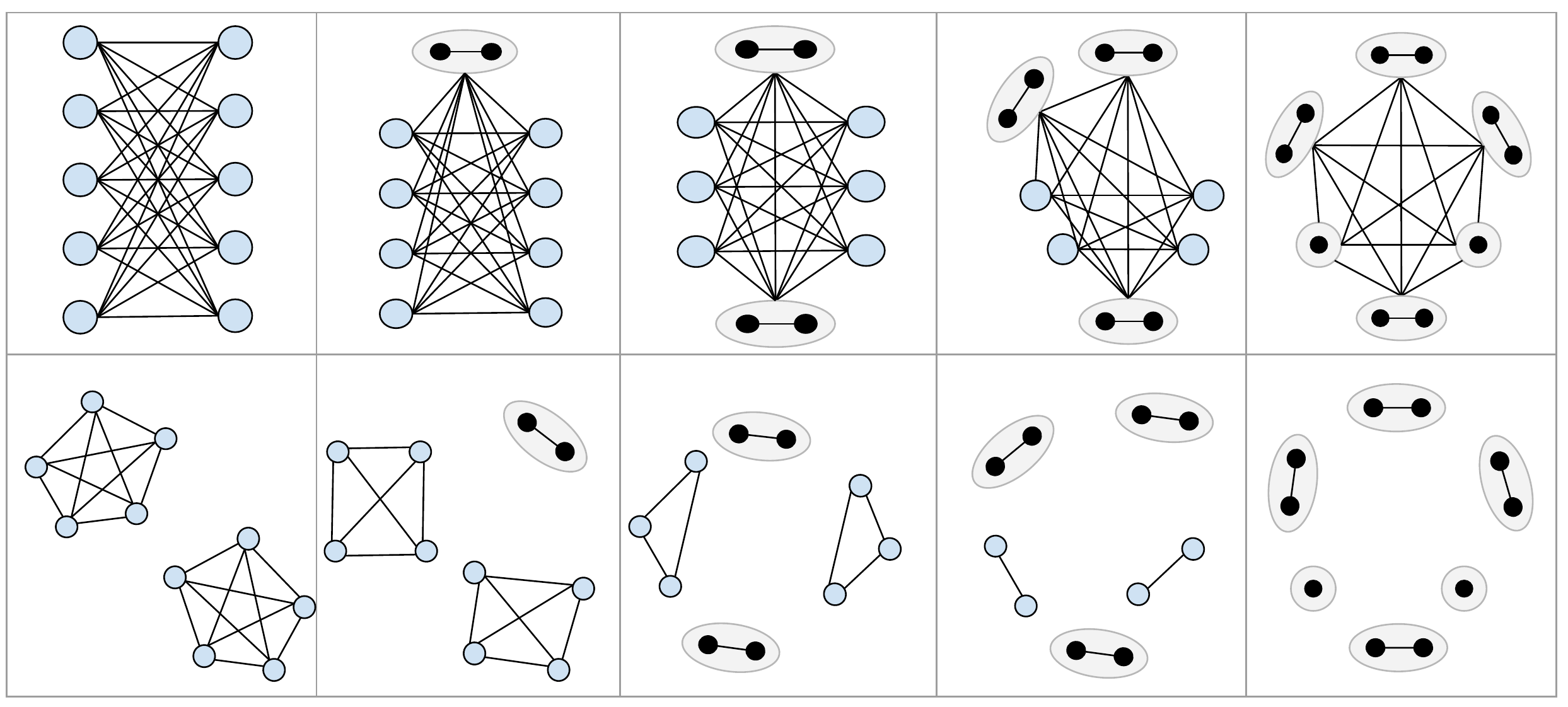}
  \caption{Top row: the construction of the MSC for a complete bipartite graph \bip{5}{5}. Bottom row: the evolution of the complement for each minor. The final MSC minor is the complete graph \Kn{6} which has a corresponding complement graph which has an empty edge set}
  \label{fig:complete_bipartite_complement_series}
\end{figure}
\begin{theorem}
\label{th:edge-contraction}
For a complete bipartite graph $\g{G} =\bip{c}{c}$, contracting a single edge will result in a minor $\g{M}^{(1)}_{\g{G}}$ which is a member of the MSC.
\end{theorem}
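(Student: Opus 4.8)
The plan is to pin down exactly which graph $\g{M}^{(1)}_{\g{G}}$ is, and then verify the two defining properties of an MSC member in turn: that it is not isomorphically embeddable in $\g{G}$ itself, and that it is not a proper subgraph of any other minor of $\g{G}$. Since $\bip{c}{c}$ is edge-transitive — an edge $(x_i,x_j)$ with $x_i,x_j$ in opposite partitions can be carried to any other such edge by independently permuting the two partitions, and this permutation is an automorphism of $\bip{c}{c}$ — the graph obtained by contracting a single edge is well defined up to isomorphism, so ``$\g{M}^{(1)}_{\g{G}}$'' is unambiguous. Writing $w$ for the merged vertex, the in/out-degree bookkeeping of Sec.~\ref{sec:defs} gives that $w$ is adjacent to the remaining $c-1$ vertices of each partition, while those $2(c-1)$ vertices still induce a $\bip{c-1}{c-1}$. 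Hence $\g{M}^{(1)}_{\g{G}}$ is $\bip{c-1}{c-1}$ together with one extra vertex joined to all of it, a graph on $2c-1$ vertices; equivalently, in complement language, $(\g{M}^{(1)}_{\g{G}})^{c} = \Kn{c-1}\sqcup\Kn{c-1}\sqcup\Kn{1}$, matching the first step in Fig.~\ref{fig:complete_bipartite_complement_series}.

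The easy half is that $\g{M}^{(1)}_{\g{G}}$ is not isomorphically embeddable in $\g{G}$: it contains a triangle (the vertex $w$ together with one vertex from each surviving partition), whereas every subgraph of the bipartite graph $\bip{c}{c}$ is bipartite and hence triangle-free. The same observation drives the remaining half. Suppose $\g{M}'$ is any minor of $\bip{c}{c}$ with $\g{M}^{(1)}_{\g{G}}\subseteq\g{M}'$; then $\g{M}'$ has at least $2c-1$ vertices. A minor obtained from $\bip{c}{c}$ by $k$ edge contractions and $m$ vertex deletions (and any number of edge deletions) has exactly $2c-k-m$ vertices, because each legal contraction merges exactly one pair of distinct vertices even after parallel edges are simplified; thus $k+m\le 1$. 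If $k=0$ then $\g{M}'$ is a subgraph of $\bip{c}{c}$, hence triangle-free, and cannot contain $\g{M}^{(1)}_{\g{G}}$. So $k=1$ and $m=0$: $\g{M}'$ is a single-edge contraction of $\bip{c}{c}$ — which by edge-transitivity is $\g{M}^{(1)}_{\g{G}}$ up to isomorphism — followed by possible edge deletions, so $\g{M}'\subseteq\g{M}^{(1)}_{\g{G}}$. Together with $\g{M}^{(1)}_{\g{G}}\subseteq\g{M}'$ this forces $\g{M}'\cong\g{M}^{(1)}_{\g{G}}$, so no other minor contains it.

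Combining the two halves, $\g{M}^{(1)}_{\g{G}}$ is neither a subgraph of $\g{G}$ nor a proper subgraph of any other minor of $\g{G}$, which is precisely the definition of a member of the MSC. The step I expect to carry the real weight is the second one: one must be confident that the vertex-count identity ($2c-k-m$) holds for every legal sequence of contraction and deletion operations, and in particular that a single contraction, up to the symmetry of $\bip{c}{c}$, can only produce $\g{M}^{(1)}_{\g{G}}$ or an edge-subgraph of it. Everything else reduces to the structural description of $\g{M}^{(1)}_{\g{G}}$ and the triangle-free/parity observation.
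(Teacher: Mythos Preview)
Your argument is correct and in fact more complete than the paper's. The paper's proof establishes only the first MSC condition: it observes that the merged vertex has degree $2(c-1)$ and is adjacent to both partitions, and concludes that $\g{M}^{(1)}_{\g{G}}$ cannot occur as a subgraph of $\bip{c}{c}$. Your triangle observation is the same idea phrased slightly differently. What you add, and the paper omits, is the verification of the \emph{second} MSC condition --- that $\g{M}^{(1)}_{\g{G}}$ is not properly contained in any other minor of $\bip{c}{c}$. Your vertex-count identity $|V(\g{M}')|=2c-k-m$ together with edge-transitivity of $\bip{c}{c}$ handles this cleanly: any minor on at least $2c-1$ vertices is either a subgraph of $\bip{c}{c}$ (hence triangle-free) or a single-edge contraction followed by edge deletions (hence a subgraph of $\g{M}^{(1)}_{\g{G}}$ itself). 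This is exactly the step the paper's proof leaves implicit, and your treatment of it is sound. One small remark: both arguments tacitly assume $c\ge 2$, consistent with Corollary~\ref{cor: K11}, since for $c=1$ the contraction yields $\Kn{1}$, which \emph{is} a subgraph of $\bip{1}{1}$.
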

\begin{proof}
By definition, a bipartite graph does not have connections between vertices contained in the same vertex partition. Contracting any edge on a complete bipartite graph will result in a vertex bag which is connected to both bipartite partitions. Consider an edge on the bipartite graph $e_{ij}$ connecting vertex $v_i$ in the left partition and vertex $v_j$ in the right partition. We also define the sets of $(N-1)$ neighboring vertices $\lbrace v^{\prime}_i \rbrace, \lbrace v^{\prime}_j \rbrace$. All vertices in $\lbrace v^{\prime}_i \rbrace$ are in the right partition, and vice versa for $\lbrace v^{\prime}_j \rbrace$. The contraction of $e_{ij}$ creates the vertex bag $\phi (x_{ij})$ which is fully connected: having connections to all vertices in $\lbrace v^{\prime}_i \rbrace \cup \lbrace v^{\prime}_j \rbrace$. Consequently, such a minor cannot exist as a subgraph of the original graph because of its degree $2(N-1)$ and its connections between both partitions.
\end{proof}
It is also seen that the contraction of an edge on the original bipartite graph and the creation of a minor with a fully connected vertex results in a disconnected vertex in the minor's complement. Next we consider the action of contracting two edges on a graph. First, considering the case of edges which are non-adjacent on the original graph (do not share any vertices).
\begin{corollary}
\label{cor: non-adjacent-edge contraction} 
The contraction of a pair of non-adjacent edges $(e_1, e_2)$ on a complete bipartite graph \bip{c}{c} will result in a MSC minor. 
\end{corollary}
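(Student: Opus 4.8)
The plan is to reduce the Corollary to Theorem~\ref{th:edge-contraction} together with a short structural analysis of the graph produced by two simultaneous contractions. First I would fix notation: since $e_1$ and $e_2$ are non-adjacent they have four distinct endpoints, $e_1=(a_1,b_1)$ and $e_2=(a_2,b_2)$ with $a_1,a_2$ in the left partition and $b_1,b_2$ in the right partition. Because $\bip{c}{c}$ is vertex- and edge-transitive the particular choice is immaterial, so one representative pair suffices. As the two edges share no vertex their contractions are independent of order, so I would contract $e_1$ first --- Theorem~\ref{th:edge-contraction} gives that the bag $\phi(x_1)$ becomes adjacent to every remaining vertex --- and then contract $e_2$ in the resulting graph, noting that $\phi(x_2)$ likewise becomes adjacent to every remaining vertex, $\phi(x_1)$ included.

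Next I would describe the resulting minor $\g{M}^{(2)}_{\g{G}}$ explicitly. It has $2c-2$ vertices: the two bags $x_1,x_2$ (mutually adjacent and adjacent to everything else) together with the $c-2$ untouched left and $c-2$ untouched right vertices, which still induce $\bip{c-2}{c-2}$ among themselves; equivalently $\g{M}^{(2)}_{\g{G}} = \Kn{2}+\bip{c-2}{c-2}$ as a graph join, whose complement is $\Kn{c-2}\cup\Kn{c-2}\cup 2\Kn{1}$ --- each contraction having produced one isolated vertex in the complement, extending the pattern seen after Theorem~\ref{th:edge-contraction}. For $c\ge 3$ (the case $c=2$ is a cycle, already excluded) this minor is not isomorphically embeddable in $\g{G}$: the bags $x_1,x_2$ with any untouched left vertex and any untouched right vertex span a $\Kn{4}$, whereas $\bip{c}{c}$ is triangle-free.

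It then remains to show $\g{M}^{(2)}_{\g{G}}$ is not contained as a subgraph in any other minor of $\g{G}$, and this containment check is where I expect the only real bookkeeping. I would organize it by vertex count. Every minor of $\g{G}$ with more than $2c-2$ vertices is a subgraph either of $\g{G}$ itself ($2c$ vertices) or of the one-contraction minor $\g{M}^{(1)}_{\g{G}}=\Kn{1}+\bip{c-1}{c-1}$ ($2c-1$ vertices); the former is bipartite and the latter has clique number $3$, so neither can contain the $\Kn{4}$-containing $\g{M}^{(2)}_{\g{G}}$. Every minor with exactly $2c-2$ vertices is a subgraph of $\g{M}^{(2)}_{\g{G}}$, or of $\g{M}^{(1)}_{\g{G}}$ with one vertex removed, or bipartite; in the latter two cases the clique number is again at most $3$, so no containment is possible, and a subgraph of $\g{M}^{(2)}_{\g{G}}$ cannot properly contain it. Minors with fewer than $2c-2$ vertices fail on cardinality. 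Putting these together, $\g{M}^{(2)}_{\g{G}}$ is non-isomorphically-embeddable and maximal under subgraph containment among the minors of $\g{G}$, hence lies in the MSC. The cleanest way to discharge the ``exactly $2c-2$ vertices'' case --- equivalently, that $\g{M}^{(2)}_{\g{G}}$ is edge-maximal among minors of that order --- is the clique-number comparison above, although it could also be read off from the general density property of the MSC sequence established in the remainder of this section.
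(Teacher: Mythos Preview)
Your approach is essentially the same as the paper's: invoke Theorem~\ref{th:edge-contraction} twice to produce two universal (fully connected) vertices in $\g{M}^{(2)}_{\g{G}}$, then argue that no other minor of $\g{G}$ can contain it as a subgraph. The paper's own proof is briefer---it checks non-containment only against $\g{M}^{(0)}$ and $\g{M}^{(1)}$ by counting fully connected vertices---whereas you use the clique number ($\Kn{4}$ versus at most $\Kn{3}$) and carry out the full case analysis over all minors by vertex count; this is more thorough, but the underlying idea is the same.
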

\begin{proof}
The contraction of the first edge $e_1$ will result in a MSC minor $\g{M}^{(1)}_{\g{G}}$ by creating a fully connected vertex. Contracting the second edge $e_2$ will also create a fully connected vertex. As a result, the minor created by contracting two non-adjacent edges $\g{M}^{(2)}_{\g{G}}$ is identified as a member of the MSC: it cannot be contained in  $\g{M}^{(1)}_{\g{G}}$ as a subgraph due to the additional fully connected vertex, nor is it possible for it to be a subgraph of $\g{M}^{(0)}$.
\end{proof}
When non-adjacent edges are contracted we see that another disconnected vertex exists on the minor's complement. This effect is not observed for the minor formed when two adjacent edges are chosen on the original graph and then contracted.
\begin{corollary}
\label{cor:adjacent-edge contraction} 
The contraction of a pair of adjacent edges on a complete bipartite graph will not result in a MSC minor. Contracting adjacent edges will result in a graph minor which is a subgraph of the MSC minor generated by contracting non-adjacent edges. Likewise the complement of such a minor will contain the MSC minor complement as a subgraph. 
\end{corollary}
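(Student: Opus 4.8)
The plan is to compute both two-edge minors explicitly, exhibit an injective graph homomorphism (an edge-preserving, vertex-injective map) from the adjacent-edge minor into the non-adjacent-edge minor, and then obtain the complement statement immediately.

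First I would fix representatives, which suffices because \bip{c}{c} is edge-transitive on adjacent pairs of edges and, separately, on non-adjacent pairs. Take $e_1=(v_i,v_j)$ and $e_2=(v_i,v_k)$ adjacent at the left-partition vertex $v_i$ (so $v_j,v_k$ lie in the right partition), and take $e_1'=(v_i,v_j)$, $e_2'=(v_k,v_\ell)$ non-adjacent with all four endpoints distinct. Contracting $e_1$ then $e_2$ merges $v_i,v_j,v_k$ into a single bag $B$; the in-degree/out-degree bookkeeping of Sec.~\ref{sec:defs} shows $B$ is adjacent to every remaining vertex (to each remaining left vertex through $v_j$ or $v_k$, to each remaining right vertex through $v_i$), so this minor is \bip{c-1}{c-2} together with the one universal vertex $B$, on $2c-2$ vertices. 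Contracting the non-adjacent pair instead produces two bags $B_1=\{v_i,v_j\}$ and $B_2=\{v_k,v_\ell\}$, each adjacent to all of the other $2c-3$ vertices (including one another), so this minor is \bip{c-2}{c-2} together with two mutually adjacent universal vertices $B_1,B_2$, also on $2c-2$ vertices --- consistent with the extra fully connected vertex noted in Corollary~\ref{cor: non-adjacent-edge contraction}.

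Next I would write down the map: send $B\mapsto B_1$; send one chosen free left vertex of the adjacent-edge minor to $B_2$, the other $c-2$ free left vertices bijectively to the $c-2$ free left vertices of the non-adjacent minor, and the $c-2$ free right vertices bijectively to the $c-2$ free right vertices. Every edge of the adjacent-edge minor is either incident to $B$ or joins a free left vertex to a free right vertex; edges of the first kind map to edges at the universal vertex $B_1$, and edges of the second kind map either to an edge of \bip{c-2}{c-2} or to an edge at the universal vertex $B_2$. Hence the map is an edge-preserving, vertex-injective map onto a subgraph, in fact a bijection on the two vertex sets (both of size $2c-2$), so the adjacent-edge minor is isomorphic to a spanning subgraph of the non-adjacent-edge minor; for $c\ge 3$ it is a proper subgraph since the edge counts are $c^{2}-c-1<c^{2}-3$. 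By the MSC definition of Sec.~\ref{sec:defs}, a minor that embeds as a subgraph into a distinct minor cannot lie in the MSC, which gives the first two assertions. Finally, because both minors are carried onto one common vertex set by the map above, complementation reverses edge-set inclusion, so the non-adjacent-edge minor's complement is a spanning subgraph of the adjacent-edge minor's complement --- the last assertion. (In the degenerate case $c=2$ the two minors are both \Kn{2}; there one instead notes that the adjacent-edge minor is a subgraph of the strictly larger minor $\g{M}^{(1)}_{\g{G}}$ to see it is not in the MSC.)

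The step I expect to be the main obstacle is the bookkeeping in the two minor computations --- in particular getting the adjacencies of $B$ right when the second contracted edge already has an endpoint absorbed into a bag --- together with checking that the proposed vertex map carries every edge across; after that the complement statement is just the elementary fact that complementation inverts subgraph inclusion on a common vertex set.
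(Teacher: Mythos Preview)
Your proposal is correct and follows essentially the same approach as the paper: both arguments compute the two minors and compare their edge sets to see that the adjacent-edge contraction yields a subgraph of the non-adjacent one. The paper's proof is a brief sketch that simply lists the new intra-partition connections created by each second contraction (both $a$-side and $b$-side for the non-adjacent pair, only one side for the adjacent pair), whereas you carry out the same comparison more carefully by naming the two minors as $K_{c-1,c-2}$ plus one universal vertex versus $K_{c-2,c-2}$ plus two universal vertices, exhibiting an explicit vertex bijection, and checking each edge type; your handling of the degenerate case $c=2$ and the complement statement are also fine.
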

\begin{proof}
Consider two sets edges on a complete bipartite graph. Set 1 contains a pair of non-adjacent edges ($S_1 = \lbrace (v^{a}_{1}\leftrightarrow v^{b}_{1}),(v^{a}_{2} \leftrightarrow v^{b}_{2})\rbrace$); the edges connect four distinct vertices on the graph with 2 vertices in each bipartite partition. Set 2 contains a pair of adjacent edges; they connect 3 distinct vertices on the graph ($S_2 = \lbrace (v^{a}_{1} \leftrightarrow v^{b}_{1}),  (v^{a}_{2} \leftrightarrow v^{b}_{1}) \rbrace$). The first contraction in either set will add connections between vertices in the same partition: $v^{a}_{1} \leftrightarrow v^{a}_{2}, v^{a}_{3},\dots v^{a}_{c}$, $v^{b}_{1} \leftrightarrow v^{b}_{2}, v^{b}_{3},\dots v^{b}_{c}$. Contracting a non-adjacent edge adds the (unique) connections: $v^{a}_{2} \leftrightarrow  v^{a}_{3},\dots v^{a}_{c}$, $v^{b}_{2} \leftrightarrow v^{b}_{3},\dots v^{b}_{c}$, but contracting an adjacent edge only adds the connections: $v^{a}_{2} \leftrightarrow v^{a}_{3},\dots v^{a}_{c}$.
\end{proof}

The results for non-adjacent edges are extended further, establishing that there is a finite size edge matching which will generate all minors of the MSC for a complete bipartite graph.
\begin{theorem}
\label{th:complete set_maxminors_Kcc}
For a complete bipartite graph \bip{N}{N} the cardinality of the MSC is equal to $n$ and is formed by the contraction of a set of $(n-1)$ non-adjacent edges where $n$ is the size of the maximal edge matching set.
\end{theorem}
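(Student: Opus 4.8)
The plan is to show that contracting matchings of sizes $0,1,\dots,n-1$ in $\bip{N}{N}$ (where $n=N$ is the size of a perfect matching) produces exactly the members of the MSC, and that there are $n$ of them. I would organize the argument into four steps, carried out in the order below.

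\emph{Step 1: reduction to matchings.} Theorem~\ref{th:edge-contraction} and Corollaries~\ref{cor: non-adjacent-edge contraction}--\ref{cor:adjacent-edge contraction} already establish that minors formed by edge deletion are only subgraphs of contraction minors (hence never MSC members), and that contracting adjacent edges yields a subgraph of what one gets by contracting a non-adjacent pair; the same comparison extends to any ``tangled'' contraction. So it suffices to understand $\g{M}^{(k)}$, the minor obtained by contracting a matching of size $k$, for $0\le k\le n$.

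\emph{Step 2: the structure of $\g{M}^{(k)}$.} Relabelling so the matching is $\{(a_1,b_1),\dots,(a_k,b_k)\}$ and writing $x_i=\{a_i,b_i\}$ for the resulting bags, a direct check of adjacencies in $\bip{N}{N}$ shows each $x_i$ is universal, the $x_i$ are mutually adjacent, the $2(N-k)$ untouched vertices induce $\bip{N-k}{N-k}$, and untouched vertices on the same side stay non-adjacent; hence $\g{M}^{(k)}=\Kn{k}\vee\bip{N-k}{N-k}$, with complement equal to $k$ isolated vertices plus two disjoint copies of $\Kn{N-k}$. Since $\mathrm{Aut}(\bip{N}{N})$ acts transitively on matchings of a fixed size, $\g{M}^{(k)}$ is well defined up to isomorphism. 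The boundary cases fix the count: $\g{M}^{(0)}=\bip{N}{N}$ and $\g{M}^{(n-1)}=\Kn{N-1}\vee\Kn{2}=\Kn{N+1}$ with empty complement (the final minor of Fig.~\ref{fig:complete_bipartite_complement_series}), whereas contracting the \emph{full} perfect matching only yields $\g{M}^{(n)}=\Kn{N}\subsetneq\Kn{N+1}$, which is therefore not a new MSC member --- this is precisely why the largest useful matching has size $n-1$ and the MSC has exactly $n$ members.

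\emph{Step 3: membership and distinctness.} For $1\le k\le n-1$ one has $\omega(\g{M}^{(k)})=k+2\ge3$, so $\g{M}^{(k)}$ contains a triangle and cannot embed isomorphically into the triangle-free $\bip{N}{N}$; and since clique number does not increase under passing to a subgraph, $\g{M}^{(k)}$ is not a subgraph of $\g{M}^{(j)}$ for $j<k$, while for $j>k$ it is strictly larger than $\g{M}^{(j)}$. With $\g{M}^{(0)}=\bip{N}{N}$ included as the trivial member (covered by no proper minor), this gives $n$ pairwise-incomparable genuine MSC members.

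\emph{Step 4: completeness of the cover --- the main obstacle.} Here I must show every minor of $\bip{N}{N}$ is a subgraph of some $\g{M}^{(k)}$ with $0\le k\le n-1$. An arbitrary contraction minor comes from disjoint connected branch sets of $\bip{N}{N}$; classify them into $p$ single left vertices, $q$ single right vertices, and $s$ ``mixed'' sets meeting both parts. Connectivity and bipartiteness force the mixed sets to be mutually adjacent and adjacent to every single vertex, the $p$ left singletons to be mutually non-adjacent, likewise the $q$ right singletons, and every left singleton adjacent to every right singleton; so the densest such minor is $\Kn{s}\vee\bip{p}{q}$, and disjointness of the branch sets gives $p+s\le N$ and $q+s\le N$. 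Taking $k=s$ when $s\le N-1$ (and $k=n-1$, with the minor reduced to $\Kn{N}$, when $s=N$) yields $\Kn{s}\vee\bip{p}{q}\subseteq\Kn{k}\vee\bip{N-k}{N-k}=\g{M}^{(k)}$ since $s\le k$, $p,q\le N-k$ and the join structure is respected; any further edge or vertex deletions only shrink the minor. Steps~3 and~4 together identify the MSC as $\{\g{M}^{(0)},\g{M}^{(1)},\dots,\g{M}^{(n-1)}\}$, of cardinality $n=N$, generated by the sub-matchings of a single set of $n-1$ non-adjacent edges. The delicate point is the branch-set case analysis of Step~4 --- in particular branch sets that are large or that sit entirely on one side, and the verification that the containment $\Kn{s}\vee\bip{p}{q}\subseteq\g{M}^{(k)}$ really respects the join; the remaining steps follow by induction from Theorem~\ref{th:edge-contraction} and Corollaries~\ref{cor: non-adjacent-edge contraction}--\ref{cor:adjacent-edge contraction} together with the symmetry of $\bip{N}{N}$.
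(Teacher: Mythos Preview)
Your proposal is correct and follows the same overall strategy as the paper --- both identify the MSC with the sequence $\g{M}^{(0)},\dots,\g{M}^{(n-1)}$ obtained by contracting sub-matchings of a fixed $(n-1)$-matching --- but your treatment is considerably more thorough than the paper's own proof.

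The paper argues essentially by degree counting: a single contraction produces one vertex of degree $2N-2$, two non-adjacent contractions produce two vertices of degree $2N-3$, and so on until $\g{M}^{(N-1)}=\Kn{N+1}$; it then asserts that further contractions are contained in $\g{M}^{(N-1)}$ and concludes $|\text{MSC}|=N$. What the paper does \emph{not} do is your Step~4: it never verifies that an arbitrary minor of $\bip{N}{N}$ (formed by arbitrary branch sets, not just matching contractions) must sit inside some $\g{M}^{(k)}$. Your branch-set classification --- observing that connected branch sets confined to one side are forced to be singletons, so every minor has the shape $\Kn{s}\vee\bip{p}{q}$ with $p+s,q+s\le N$ --- supplies exactly the missing completeness argument, and the containment $\Kn{s}\vee\bip{p}{q}\subseteq\Kn{s}\vee\bip{N-s}{N-s}=\g{M}^{(s)}$ is clean. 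Your explicit identification $\g{M}^{(k)}=\Kn{k}\vee\bip{N-k}{N-k}$ and the clique-number monotonicity in Step~3 are also tidier than the paper's degree bookkeeping for establishing pairwise incomparability. In short: same route, but you close a genuine gap the paper leaves open.
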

\begin{proof}
By our definition, a MSC minor of a graph cannot be contained within any other minor as a subgraph. Any graph has at least one minor in its MSC, $\g{M}^{(0)} = \g{G}$ the original graph. All remaining minors in the MSC must be non-isomorphic to each other, any other minor, or the original graph. 

From Theorem (\ref{th:edge-contraction}), we established that contraction of a single edge on a bipartite graph creates the first minor $\g{M}^{(1)}$ of order $2N-1$, with $1$ vertex set of degree $2N-2$ and the remaining vertices with degree $N$. It was shown in Corollary (\ref{cor: non-adjacent-edge contraction}) that a pair of non-adjacent edge contractions will create a second minor $\g{M}^{(2)}$ of order $2N-2$ with $2$ vertex set of degree $2N-3$ and the remaining vertices with degree $N$. 

This argument is extended to postulate that $N$ non-adjacent edges could create $N$ minors in the MSC where the value of $N$ is determined by enforcing the condition that all minors in the MSC are non-isomorphic. As edges are contracted, the order of each subsequent minor is reduced by $1$, while the number of fully connected vertices is increased by $1$. After contracting $(N-1)$ edges the minor $\g{M}^{(N-1)}$ is of order $N+1$ and has $(N-1)$ vertex sets of degree $2N-N = N$ and $2$ vertices of degree $N$ (the complete graph \Kn{N+1}). While additional non-adjacent edges may exist, any further contractions will result in minors which can be contained in the $\g{M}^{(N-1)}$ minor. For the complete bipartite graph, $n=N$ the size of the maximal edge matching set.

From choosing $(N-1)$ edges to contract and including the first minor defined by the original graph, it follows that any complete bipartite graph \bip{N}{N} will have a MSC of cardinality $N$. 
\end{proof}
\begin{corollary}
\label{cor: K11}
The complete bipartite graph \bip{1}{1} has MSC of cardinality $1$. 
\end{corollary}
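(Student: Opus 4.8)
The plan is to obtain this as an immediate specialization of Theorem~\ref{th:complete set_maxminors_Kcc} to the case $N=1$. First I would observe that $\bip{1}{1}$ is a single edge joining one vertex in each partition, so its maximal edge matching set consists of that lone edge and hence has size $n=1$. Substituting $n=1$ into the statement of Theorem~\ref{th:complete set_maxminors_Kcc}, the MSC is formed by contracting a set of $(n-1)=0$ non-adjacent edges; the empty sequence of contractions leaves the graph unchanged, so the only minor produced is $\g{M}^{(0)} = \g{G} = \bip{1}{1}$ itself, giving cardinality $1$.

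The one point that needs a sentence of care is the degenerate behaviour at the bottom of the contraction sequence, since $\bip{1}{1}$ is too small for the ``fully connected vertex bag'' mechanism of Theorem~\ref{th:edge-contraction} to produce anything new. Contracting the single edge of $\bip{1}{1}$ produces $\Kn{1}$, an isolated vertex, which is isomorphically embeddable in $\bip{1}{1}$ (indeed in essentially every nonempty graph) and therefore, by the definition of the MSC, cannot be a new member. Equivalently, the contracted vertex bag has degree $2(N-1)=0$, so no genuinely new dense minor appears and the recursion of Theorem~\ref{th:complete set_maxminors_Kcc} terminates immediately.

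Finally I would remark that this is consistent with the earlier observation in Section~\ref{sec:defs} that simple paths, cycles, and stars have a trivial MSC consisting only of the original graph: $\bip{1}{1}$ is simultaneously the path on two vertices and the star $\g{S}_1$, and its single edge is a leaf (terminal vertex), hence contributes nothing to the formation of an MSC member. There is really no obstacle here; the only thing worth verifying is that the general formula of Theorem~\ref{th:complete set_maxminors_Kcc} degrades gracefully at $N=1$, which the above remarks handle.
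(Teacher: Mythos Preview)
Your proposal is correct and matches the paper's intent: the corollary is stated without its own proof, as an immediate specialization of Theorem~\ref{th:complete set_maxminors_Kcc} at $n=1$. Your additional remarks about the degenerate contraction to $\Kn{1}$ and the connection to leaves and star graphs are sound elaborations, but the paper treats the result as requiring nothing beyond reading off the case $n=1$ from the preceding theorem.
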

\begin{corollary}
\label{cor: star graphs}
A complete bipartite graph which is also a star graph $\bip{N}{1} = \mathcal{S}_N$ has MSC of cardinality $1$.
\end{corollary}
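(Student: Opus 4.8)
The plan is to reduce the statement to the edge-matching recipe of Theorem~\ref{th:complete set_maxminors_Kcc} together with a direct check that no edge contraction produces a new covering minor. First I would fix notation: write $\mathcal{S}_N = \bip{N}{1}$ as a central vertex $c$ adjacent to leaves $\ell_1,\dots,\ell_N$, with the trivial partition $\{c\}$ of order $1$. Every one of the $N$ edges $e_i = (c,\ell_i)$ passes through $c$, so the edges are pairwise adjacent and the maximal edge matching of $\mathcal{S}_N$ has size $n = \min(N,1) = 1$. By the construction in Theorem~\ref{th:complete set_maxminors_Kcc}, the non-trivial members of the MSC are generated by contracting a set of $(n-1)$ non-adjacent edges; here $n-1 = 0$, so no such contractions are available and the only candidate left is $\g{M}^{(0)} = \g{G} = \mathcal{S}_N$ itself.

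Next I would confirm directly that this is exhaustive, i.e.\ that no minor obtained by contractions can be a fresh MSC member. Minors obtained purely by edge deletion are excluded from the MSC by definition, so only edge-contraction minors must be examined. Contracting a single edge $e_i = (c,\ell_i)$ merges $\ell_i$ into $c$ and leaves the star $\mathcal{S}_{N-1}$ on the remaining leaves; iterating any $k \le N$ contractions yields $\mathcal{S}_{N-k}$. Each $\mathcal{S}_{N-k}$ is isomorphically a subgraph of $\mathcal{S}_N$ (delete $k$ leaves), hence is contained in $\g{M}^{(0)}$ and cannot be a member of the MSC. This is exactly the phenomenon noted earlier --- leaves do not contribute to any MSC member --- and parallels Corollary~\ref{cor:adjacent-edge contraction}, since in a star every pair of edges is adjacent.

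Combining the two observations, the MSC of $\mathcal{S}_N$ is $\{\g{M}^{(0)}\} = \{\mathcal{S}_N\}$, of cardinality $1$, consistent with Corollary~\ref{cor: K11} (the case $N=1$, where $\mathcal{S}_1 = \bip{1}{1}$). I expect the only mildly delicate point to be the bookkeeping that a contraction in the star genuinely returns a subgraph --- not merely a minor --- of the original graph, including the degenerate small cases $N=1,2$ (where $\mathcal{S}_2$ is the path $P_3$ and its contraction is $K_2 \subset P_3$); this is routine and does not constitute a real obstacle.
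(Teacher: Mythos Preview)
Your argument is correct. The paper does not supply a separate proof for this corollary; it is simply recorded as an immediate consequence of Theorem~\ref{th:complete set_maxminors_Kcc}. Your write-up goes somewhat further than the paper by noting that Theorem~\ref{th:complete set_maxminors_Kcc} is literally stated only for \bip{N}{N}, and then supplying the direct verification (every contraction in a star returns a smaller star, hence a subgraph of $\g{M}^{(0)}$) to close the gap. That extra paragraph is precisely the content later isolated as Theorem~\ref{theor:leaves contraction} on terminal vertices, so your approach is fully in line with the paper's framework, just made explicit where the paper leaves it implicit.
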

From the above theorems we present our final result for complete graphs.
\begin{theorem}
\label{cor:maximum clique minor}
The largest clique minor of a \bip{N}{N} graph is \Kn{N+1}.
\end{theorem}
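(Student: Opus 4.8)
The statement has two halves: that $\Kn{N+1}$ occurs as a minor of $\bip{N}{N}$, and that no larger complete graph does. The first half is already available --- Theorem~\ref{th:complete set_maxminors_Kcc} produces $\g{M}^{(N-1)} = \Kn{N+1}$ as the terminal member of the MSC, obtained by contracting the $N-1$ non-adjacent edges of a perfect matching on $\bip{N}{N}$. So I would devote the proof to the upper bound: $\bip{N}{N}$ admits no $\Kn{N+2}$ minor.

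The plan for the upper bound is a short counting argument. Suppose, for contradiction, that $\Kn{N+2}$ is a minor, witnessed by pairwise-disjoint, connected, pairwise-adjacent branch sets $B_1,\dots,B_{N+2} \subseteq V(\bip{N}{N})$, and let $A,B$ denote the two partitions, each of order $N$. A branch set on at least two vertices is connected, hence contains an edge, hence meets both $A$ and $B$; so such a branch set uses at least two vertices. If $s$ of the $N+2$ branch sets are singletons, then disjointness forces $s + 2(N+2-s) \le |V| = 2N$, which simplifies to $s \ge 4$.

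I would then invoke the pigeonhole principle on these (at least) four singleton branch sets across the two partitions: two of them, say $\{a_1\}$ and $\{a_2\}$, lie in the same partition. But then they cannot be adjacent, since $\bip{N}{N}$ has no edge within a partition, contradicting the requirement that every pair of branch sets of a $\Kn{N+2}$ minor be adjacent. Hence $\Kn{N+2}$ is not a minor, and together with the first half this shows the largest clique minor of $\bip{N}{N}$ is exactly $\Kn{N+1}$.

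I expect no real obstacle; the only points deserving care are that the counting inequality is tight (equality $s = 4$ forces every non-singleton branch set to be a single edge, consistent with the $\g{M}^{(N-1)}$ construction) and that connectedness in a bipartite graph genuinely forces a branch set of size $\ge 2$ to meet both sides. As a cross-check one could instead argue via treewidth --- a graph containing a $\Kn{k}$ minor has treewidth at least $k-1$, and $\tau(\bip{N}{N}) = N$, matching the remark $\tau(\bip{c}{c}) = \tau(\Kn{c+1}) = c$ quoted above --- but the direct counting proof is self-contained and also makes transparent why $N+1$ is attained while $N+2$ is not.
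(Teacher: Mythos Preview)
Your argument is correct, and the upper-bound half is genuinely different from the paper's. The paper does not give a direct combinatorial obstruction to a $\Kn{N+2}$ minor; instead it leans on the MSC framework of Theorem~\ref{th:complete set_maxminors_Kcc}: having argued that the MSC of $\bip{N}{N}$ consists precisely of the sequence $\g{M}^{(0)},\dots,\g{M}^{(N-1)}$ terminating in $\Kn{N+1}$, it concludes that $\Kn{N+2}$ ``cannot exist or it would be in the MSC''. That is, any minor must be a subgraph of some $\g{M}^{(i)}$, and none of those contains $\Kn{N+2}$. Your approach bypasses the MSC machinery entirely: the branch-set count $s + 2(N+2-s) \le 2N$ forcing $s \ge 4$ singletons, followed by pigeonhole into two parts, is a clean self-contained proof that does not depend on having already classified the MSC. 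This buys you independence from the somewhat informal claim in Theorem~\ref{th:complete set_maxminors_Kcc} that \emph{no other} minors beyond the matching-contraction sequence can enter the MSC; your argument would stand even if that classification were incomplete. Conversely, the paper's route has the virtue of tying the result directly into the MSC narrative that motivates the whole section. Your treewidth cross-check is also sound and matches the remark the paper cites from Klymko et~al.

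One cosmetic remark: the $N-1$ edges you contract are a matching but not a \emph{perfect} matching (a perfect matching on $\bip{N}{N}$ has $N$ edges); you might say ``$N-1$ edges of a matching'' or ``all but one edge of a perfect matching'' to avoid ambiguity.
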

\begin{proof}
The original bipartite graph \g{G} =\bip{N}{N} has $2N$ total vertices of degree $d = N$. The MSC of such a graph is the ordered sequence formed by contracting $(N-1)$ non-adjacent edges: $\lbrace \g{M}^{(0)}, \g{M}^{(1)}, \g{M}^{(2)},\dots,\g{M}^{(N-1)} \rbrace $, where $\g{M}^{(0)} = \g{G}$. The order of minor $\g{M}^{(i)}$ is determined by the previous minor: 
\begin{equation}
|V(\g{M}^{(i)})| = |V(\g{M}^{(i-1)})| -1 =  |V(\g{M}^{(0)})| - (i).
\end{equation}
The number of vertices of degree $N$ is reduced by 2 on each minor while the number of fully connected vertices is increased by 1. As a result the final MSC minor $\g{M}^{(N-1)}$ is of order $2N - (N-1) = N+1$ and $N$-degree regular, it is the graph \Kn{N+1}. By definition the graph \Kn{N+2} cannot exist or it would be in the MSC and the graph \Kn{N} is contained within \Kn{N+1} as a subgraph.
\end{proof}
\begin{corollary}
The largest clique minor of a \bip{N}{N^{\prime}} graph is \Kn{\min{(N,N^{\prime})}+1}.
\end{corollary}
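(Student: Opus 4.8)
The plan is to prove the two inequalities separately. Write $d_{\min} = \min(N,N')$ and assume without loss of generality that $N \le N'$, so $d_{\min} = N$. The lower bound asserts that $\Kn{N+1}$ is a minor of $\bip{N}{N'}$, and the upper bound asserts that $\Kn{N+2}$ is not.

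For the lower bound I would simply observe that deleting $N' - N$ vertices from the larger partition of $\bip{N}{N'}$ leaves the complete bipartite graph $\bip{N}{N}$ as a subgraph. Since any minor of a subgraph is a minor of the host graph, and Theorem~\ref{cor:maximum clique minor} establishes that $\Kn{N+1}$ is (the final element of the MSC of, hence a minor of) $\bip{N}{N}$, it follows immediately that $\Kn{N+1}$ is a minor of $\bip{N}{N'}$.

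For the upper bound I would argue directly from the definition of minor embedding. Suppose $\Kn{N+2}$ embeds in $\bip{N}{N'}$, realized by $N+2$ pairwise-disjoint connected vertex bags that are pairwise adjacent, and fix the bipartition $(A,B)$ with $|A| = N$. Because each partition is an independent set, any bag contained entirely in one partition must be a single vertex, and no two single-vertex bags in the same partition can be adjacent; hence at most one bag lies wholly in $A$ and at most one lies wholly in $B$. Every remaining bag is ``mixed'' and therefore contains at least one vertex of $A$. A short case analysis over how many bags are pure-$A$, pure-$B$, or mixed then forces the collection of bags to occupy strictly more than $N$ distinct vertices of $A$, contradicting $|A| = N$. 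Equivalently, one may invoke the treewidth inequality: $\tau(\bip{N}{N'}) = N$ by the lemma of Ref.~\cite{klymko2014maximal} noted in Sec.~\ref{sec:embedding in bipartite virtual hardware}, while $\tau(\Kn{N+2}) = N+1$, and minors cannot increase treewidth.

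The main obstacle is the bookkeeping in the upper-bound step: one must check that the counting argument handles every distribution of pure-$A$, pure-$B$, and mixed bags, and in particular that a mixed bag genuinely consumes at least one vertex of the smaller partition. The treewidth route sidesteps this entirely at the cost of importing an external lemma, so I would present the direct counting argument as the primary proof and mention the treewidth argument as a remark, since it also reconciles the statement with the $\Kn{nc+1}$ clique bound for the Chimera hardware graph.
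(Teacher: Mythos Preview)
Your proposal is correct. The paper itself offers no explicit proof for this corollary; it is stated as an immediate consequence of Theorem~\ref{cor:maximum clique minor} and the MSC framework of Theorem~\ref{th:complete set_maxminors_Kcc}, the implicit argument being that the maximal matching of \bip{N}{N^{\prime}} has size $\min(N,N^{\prime})$, so the MSC terminates at \Kn{\min(N,N^{\prime})+1} exactly as in the balanced case.

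Your lower bound (pass to the balanced subgraph \bip{N}{N} and invoke Theorem~\ref{cor:maximum clique minor}) is precisely the intended reduction. Your upper bound, however, is more explicit and more self-contained than what the paper provides: the paper's upper bound for the balanced case is the one-line ``by definition \Kn{N+2} cannot exist or it would be in the MSC,'' and the unbalanced corollary inherits this without further comment. Your bag-counting argument (at most one pure-$A$ bag, at most one pure-$B$ bag, every mixed bag occupies a vertex of the smaller side, forcing $p_A + m = N+2 - p_B \le N$ and hence $p_B \ge 2$) is a clean direct proof that does not depend on the MSC machinery at all, and the treewidth alternative aligns with the remark the paper already makes about $\tau(\bip{c}{c}) = \tau(\Kn{c+1})$. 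Either of your routes would strengthen the paper's presentation; the counting argument in particular is elementary and avoids importing the external treewidth lemma.
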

\begin{corollary}
The clique number of each minor is strictly increasing over the MSC of a complete bipartite graphs.
\end{corollary}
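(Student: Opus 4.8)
The plan is to piggyback on the two facts already proved: the MSC of $\bip{N}{N}$ is the chain $\g{M}^{(0)}, \g{M}^{(1)}, \ldots, \g{M}^{(N-1)}$ in which each $\g{M}^{(i)}$ is obtained from $\g{M}^{(i-1)}$ by contracting one edge (Theorem~\ref{th:complete set_maxminors_Kcc}), and the last member is $\g{M}^{(N-1)} = \Kn{N+1}$ (Theorem~\ref{cor:maximum clique minor}). Writing $\omega(\cdot)$ for the clique number, the goal is to show $\omega(\g{M}^{(i)}) = i+2$ for $0 \le i \le N-1$, which is manifestly strictly increasing.

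First I would record a one-line general lemma: for any graph $\g{H}$ and any edge $e$, $\omega(\g{H}/e) \le \omega(\g{H}) + 1$. Indeed, if $w$ denotes the contracted vertex and $C$ is a maximum clique of $\g{H}/e$, then either $w \notin C$, in which case $C$ is already a clique of $\g{H}$ and $|C| \le \omega(\g{H})$, or $w \in C$, in which case $C \setminus \{w\}$ is a clique of $\g{H}$ (the mutual adjacencies among vertices other than the contracted one are unchanged), so $|C| - 1 \le \omega(\g{H})$. Hence every step of the MSC chain raises the clique number by at most one.

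Next I would pin down the two endpoints: $\g{M}^{(0)} = \bip{N}{N}$ is bipartite with at least one edge, hence triangle-free, so $\omega(\g{M}^{(0)}) = 2$; and $\omega(\g{M}^{(N-1)}) = \omega(\Kn{N+1}) = N+1$. The sequence $\omega(\g{M}^{(0)}), \ldots, \omega(\g{M}^{(N-1)})$ therefore has $N$ terms, runs from $2$ to $N+1$, and its $N-1$ consecutive increments are each at most $1$; since they sum to exactly $(N+1) - 2 = N-1$, every one of them must equal $1$. This yields $\omega(\g{M}^{(i)}) = i+2$ and hence strict monotonicity, the case $N=1$ (MSC a single graph) being trivial.

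The argument is short precisely because it leans on the earlier results; the only points needing care are the general contraction inequality above and the fact that the MSC really is a chain of one-edge contractions (both already in hand). As an independent check one can argue directly: after contracting $i$ pairwise non-adjacent edges, the $i$ resulting fully connected bag-vertices are pairwise adjacent and each is adjacent to every surviving original vertex, so together with one surviving left and one surviving right vertex they form a clique of size $i+2$; meanwhile the surviving left (resp.\ right) original vertices remain an independent set, since no intra-partition edge is ever created among untouched vertices, so no clique uses more than one from each side, capping $\omega$ at $i+2$. Finally, replacing $N$ by $\min(N,N')$ throughout extends the statement to $\bip{N}{N'}$, consistent with the preceding corollary.
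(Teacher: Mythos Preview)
Your proposal is correct. Your main argument is a clean pigeonhole/squeeze: each edge contraction can raise $\omega$ by at most one, the chain has $N-1$ contractions, and the endpoints $\omega(\g{M}^{(0)})=2$ and $\omega(\g{M}^{(N-1)})=N+1$ force every increment to equal one. This is genuinely different from the paper's proof, which argues constructively that each contraction creates one more vertex adjacent to both partitions and hence enlarges the maximum clique by one vertex. Your ``independent check'' paragraph is in fact the paper's argument, but you supply the upper bound (no clique can use two surviving left vertices or two surviving right vertices) that the paper leaves implicit. What your main route buys is that, having already proved Theorems~\ref{th:complete set_maxminors_Kcc} and~\ref{cor:maximum clique minor}, you need only the general one-line lemma $\omega(\g{H}/e)\le\omega(\g{H})+1$ and no further structural analysis of the intermediate minors; what the paper's route buys is a self-contained description of where the growing clique actually lives.
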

\begin{proof}
The first minor in the MSC is the bipartite graph, which has a clique number of $2$. As non-adjacent edges are contracted, each subsequent minor has an increasing number of vertices which are connected to both bipartite partitions. As a result the order of the largest clique in each minor increases by $1$ vertex and the clique number increases by $1$. The maximum clique number of \bip{N}{N} is $N+1$ and is reached on the final minor in the MSC.
\end{proof}
\begin{theorem}
\label{th: bounded treewidth}
The treewidth of any minor in the MSC of a complete bipartite graph \bip{N}{N} is N.
\end{theorem}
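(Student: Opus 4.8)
The plan is to sandwich $\tau(\g{M}^{(i)})$ between two bounds that both equal $N$, using the minor-monotonicity of treewidth together with the chain structure of the MSC established in Theorems~\ref{th:complete set_maxminors_Kcc} and~\ref{cor:maximum clique minor}. First I would recall from Sec.~\ref{sec:embedding in bipartite virtual hardware} (following Ref.~\cite{klymko2014maximal}) the two anchor values $\tau(\bip{N}{N}) = N$ and $\tau(\Kn{N+1}) = N$: the former has an explicit width-$N$ path decomposition whose bags each consist of one full side of the bipartition together with a single vertex of the opposite side, and the latter is the standard fact that $\tau(\Kn{k}) = k-1$, since any tree decomposition must place an entire clique inside some single bag.

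For the upper bound I would observe that, by Theorem~\ref{th:complete set_maxminors_Kcc}, each member $\g{M}^{(i)}$ of the MSC is obtained from $\g{M}^{(0)} = \bip{N}{N}$ by contracting $i$ edges of a fixed maximal matching, so $\g{M}^{(i)}$ is a minor of $\bip{N}{N}$; since treewidth never increases under taking minors, $\tau(\g{M}^{(i)}) \le \tau(\bip{N}{N}) = N$. For the lower bound I would use the far end of the chain: the remaining $N-1-i$ matching edges are still pairwise non-adjacent in $\g{M}^{(i)}$ (contracting one matching edge cannot merge the endpoints of another), so contracting them turns $\g{M}^{(i)}$ into $\g{M}^{(N-1)} = \Kn{N+1}$ exactly as in Theorem~\ref{cor:maximum clique minor}. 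Hence $\Kn{N+1}$ is a minor of every $\g{M}^{(i)}$, and minor-monotonicity gives $\tau(\g{M}^{(i)}) \ge \tau(\Kn{N+1}) = N$. Combining the two inequalities yields $\tau(\g{M}^{(i)}) = N$ for every $0 \le i \le N-1$, which is the claim.

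The only real subtlety --- the ``hard part,'' such as it is --- is making sure the MSC is genuinely a descending chain of contraction minors rather than merely a collection of incomparable minors, i.e.\ that each intermediate $\g{M}^{(i)}$ really does admit $\Kn{N+1}$ as a minor. This follows precisely because the entire cover is generated by successively contracting edges of one fixed maximal matching, so the contractions commute and the $N-1-i$ edges taking $\g{M}^{(i)}$ down to $\Kn{N+1}$ remain available (and non-adjacent) at every stage. Everything else is routine bookkeeping with the two anchor treewidth values, and no new structural argument about the Chimera graph is needed.
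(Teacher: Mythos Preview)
Your proposal is correct and follows essentially the same approach as the paper: both arguments sandwich $\tau(\g{M}^{(i)})$ between $\tau(\bip{N}{N})=N$ and $\tau(\Kn{N+1})=N$ using minor-monotonicity of treewidth along the sequential contraction chain $\g{M}^{(0)}\succeq\g{M}^{(1)}\succeq\cdots\succeq\g{M}^{(N-1)}$. Your write-up is in fact more careful than the paper's in justifying that the remaining matching edges survive and stay non-adjacent at each intermediate stage, but the underlying idea is identical.
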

\begin{proof}
The proof follows from a simple treewidth argument. Each minor of the \bip{N}{N} MSC is formed sequentially, so each minor in the MSC is a minor of the previous member. By definition, a graph \g{H} is a minor of \g{G} if the treewidth of \g{H} is bounded from above by the treewidth of \g{G}: $\mathbf{tw}(\g{H}) \leq \mathbf{tw}(\g{G})$. The treewidth of the first minor, a complete bipartite graph \bip{N}{N}, is known $\mathbf{tw}(\bip{N}{N}) = N$. The treewidth of the final minor \Kn{N+1} is known $\mathbf{tw}(\Kn{N+1}) = N$ and thus each minor in the remaining sequence must have treewidth $N = \mathbf{tw}(\bip{N}{N}) \geq \mathbf{tw}(\g{M}^{(i)}) \geq \mathbf{tw}(\Kn{N+1})  = N$.
\end{proof}

The construction of the MSC for a complete bipartite graph \bip{N}{N} can be implemented using a simple greedy algorithm. To add an edge to the matching set, one is chosen at random from the existing edges of the graph. The first edge is chosen from all edges of the graph. To ensure subsequently chosen edges are not adjacent to any edge which already exists in the set $E_{\g{M}}$, after adding an  edge to the minor cover edge set, all edges adjacent the head and tail of an edge $e_{ij}$ are removed from the graph. This process is repeated, with edges added to the matching set until a stopping condition is met. Once all edges in the non-adjacent edge set are contracted, and vertex sets formed, the remaining vertices of the original graph are isomorphically mapped to vertex sets of size 1. 

As stated in Corollary ~\ref{cor: K11}, the \bip{1}{1} graph minor has $|\g{M}|=1$. This is used to define the stopping condition: once the removal of the head/tail vertices of a contracted edge results in a subgraph which is \bip{1}{1} the procedure ends. It is seen that this procedure constructs a set of only $(N-1)$ non-adjacent edges, from which all possible members of the MSC are formed.   

\section{\label{sec:incomplete bipartite graphs} MSC of nearly complete bipartite graphs}
Most bipartite graphs encountered in real world applications are not complete, it is likely that connections are absent between the partitions. For example, a quantum processor may have faulty (inoperable) qubits, and thus the \Ch{n}{m}{c} hardware graph may be missing vertices.  In this section we look at the MSC for the class of incomplete bipartite graphs \incbip{N}{N}, created by removing a sparse subset of edges from a complete bipartite graph. The approach outlined in Sec.~\ref{sec: c-c complete bipartite} may have limited applicability for such graphs. Randomly choosing connections on a graph may not result in a set of $(N-1)$ non-adjacent edges; collisions between the missing edge sets and the non-adjacent edge set are probable and the completion of the non-adjacent edge set may require the addition of an edge which does not exist on the graph. Additionally, even if a set of $(N-1)$ non-adjacent edges is found, there is no guarantee the contractions will yield the clique minor \Kn{N+1}. Since random bipartite graphs, and incomplete bipartite graphs are very common in quantum annealing (discussed in Sec.~\ref{sec:embedding in bipartite virtual hardware}) this last point is very important to investigate.

The MSC of \bip{N}{N} establishes that the largest embeddable clique is \Kn{N+1}. On an incomplete bipartite graph, the edges which contract and form members of the MSC cannot be chosen completely at random. For \incbip{5}{5} missing a single edge, we show in Fig.~\ref{fig:incomplete_bipartite_maximal_minors} it still contains the clique minor \Kn{6} if the non-adjacent edge set is chosen appropriately. However this result is not guaranteed for any arbitrary non-adjacent edge set (as shown in Fig.~\ref{fig:incomplete_bipartite_minors1}). The robustness of the \Kn{N+1} minor as edges are removed from \bip{N}{N} is of importance in quantum annealing applications (discussed in Sec.~\ref{sec:embedding in bipartite virtual hardware}) where hard faults can dramatically affect the connectivity of the hardware graph. 

We show how the results in Sec.~\ref{sec: c-c complete bipartite} can be modified and applied to the simple case of \bip{N}{N} missing a single edge. From there we define three conditions which identify graphs \incbip{N}{N} that cannot have a clique minor of order $N+1$. Only graphs with equal partition orders are discussed, but the results can be extended to \incbip{N}{N^{\prime}}. However, a full discussion of how to mitigate the effects of faults is left as an open question.
\begin{figure}[htbp]
  \includegraphics[width=0.9\columnwidth]{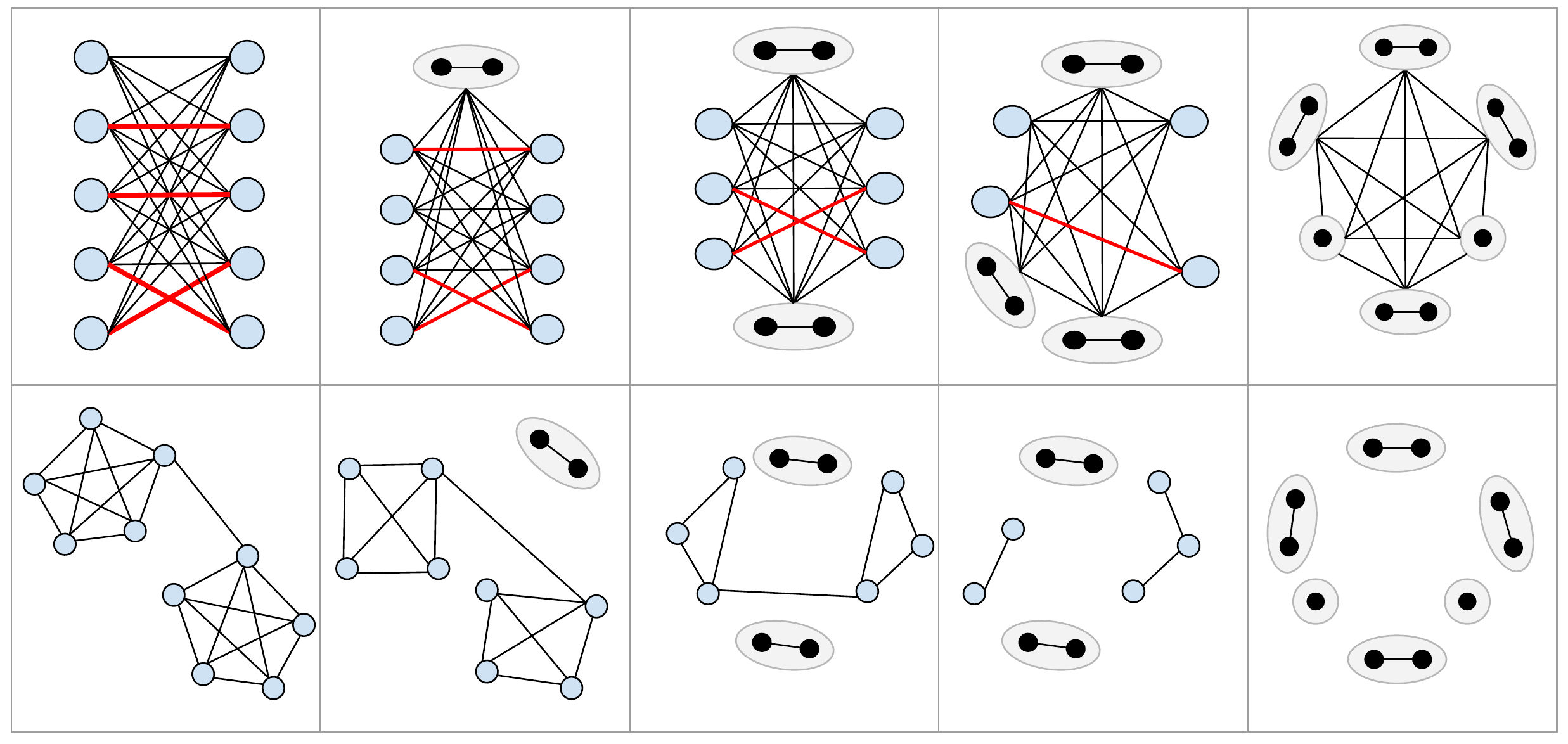}
  \caption{[Color online] Top row: evolution of a minor set generated by a non-adjacent edge set for \incbip{5}{5} (missing a single edge) which covers the set of incomplete vertices (heavy red lines). Bottom row: evolution of the complement for each minor}
  \label{fig:incomplete_bipartite_maximal_minors}
\end{figure} 

Most notably there exist a class of edges which will never assist in the creation of a MSC minor. These are identified as those edges connecting to leaves (also known as terminal vertices).
\begin{theorem}
\label{theor:leaves contraction} 
Contraction of an edge which connects to a terminal vertex will never yield a minor in the MSC.
\end{theorem}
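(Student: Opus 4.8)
\textbf{Proof proposal for Theorem~\ref{theor:leaves contraction}.}

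The plan is to show that contracting a leaf edge never produces a minor that satisfies the defining property of an MSC member, namely that the resulting minor is neither isomorphically embeddable in the original graph nor contained as a subgraph in another minor already in the cover. First I would set up notation: let $e_{ij}$ be an edge with $v_j$ a terminal vertex, so $\deg(v_j) = 1$ and $e_{ij}$ is the unique edge incident to $v_j$. Let $v_i$ have in-degree $d_i$ (counting all edges at $v_i$ except $e_{ij}$, in the sense defined in Sec.~\ref{sec:defs}). Upon contracting $e_{ij}$, the new vertex bag $\phi(x_{ij})$ inherits exactly the neighbors of $v_i$ other than $v_j$, because $v_j$ contributed no additional adjacencies. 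Hence $\deg(\phi(x_{ij})) = d_i$, which is precisely the degree $v_i$ had in the original graph.

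The key step is then to argue that the minor obtained this way is isomorphic to a subgraph of $\g{G}$ — in fact to $\g{G}$ with the leaf $v_j$ deleted. Since the contraction neither added edges between previously non-adjacent vertices nor changed the neighborhood structure of any vertex other than the merged pair (and the merged pair's neighborhood equals $v_i$'s old neighborhood), the resulting graph is exactly $\g{G} \setminus v_j$. This is a subgraph of $\g{G}$, so by the definition of the MSC (a minor in the cover must not be isomorphically embeddable in $\g{G}$), the contracted minor cannot be a new member of the cover. Equivalently, contracting a leaf edge is indistinguishable from deleting the leaf, and minors formed by edge deletion are explicitly excluded from the MSC by definition. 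I would also note the complement-graph perspective used throughout Sec.~\ref{sec: c-c complete bipartite}: contracting a genuine (non-leaf) edge on a bipartite graph creates a fully connected vertex bag and thus a newly disconnected vertex in the complement (Theorem~\ref{th:edge-contraction}), whereas contracting a leaf edge produces no such fully connected bag, so the structural signature that distinguishes MSC members is absent.

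Finally I would remark that this also shows why leaves contribute nothing to the edge matching set used in the greedy construction: if a leaf edge is chosen, the contraction can be deferred to the isomorphic mapping stage at the end, leaving the non-adjacent edge set effectively one shorter. The main obstacle I anticipate is being careful about the case where contracting the leaf edge is combined with other contractions — one must confirm that inserting a leaf-edge contraction anywhere into a sequence of non-leaf contractions yields a minor contained (as a subgraph) in the minor produced by the same sequence without the leaf-edge contraction, rather than a genuinely new minor. This reduces to the observation above applied after the other contractions, since a leaf in the original graph either remains a leaf or has already been absorbed, but the bookkeeping of which vertices remain terminal after intermediate contractions needs a clean statement.
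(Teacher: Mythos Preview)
Your proposal is correct and follows essentially the same route as the paper: both argue that contracting a leaf edge yields a vertex bag whose neighborhood coincides with that of the non-leaf endpoint (minus the leaf), so the resulting minor is exactly $\g{G}\setminus v_j$, a subgraph of $\g{G}$, and hence excluded from the MSC. One small notational slip to fix: having defined $d_i$ as the in-degree of $v_i$ (edges at $v_i$ excluding $e_{ij}$), the merged bag has degree $d_i = \deg_{\g{G}}(v_i)-1$, not ``precisely the degree $v_i$ had in the original graph''; the paper states this as degree $d_i-1$ with $d_i$ the full degree, which is the same count.
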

\begin{proof}
In our definition of the MSC, we noted that any member of the MSC cannot be contained in another member as a subgraph. Consider a graph \g{G} with a leaf vertex. The edge connecting to the leaf has a head with in-degree ($d_i -1$) while the tail has out-degree $0$. Contraction of this edge will result in a vertex set $\phi(x)$ with degree $(d_i -1)$ and the resulting minor will be a subgraph of \g{G}.
\end{proof}
\begin{figure}[htbp]
  \includegraphics[width=0.9\columnwidth]{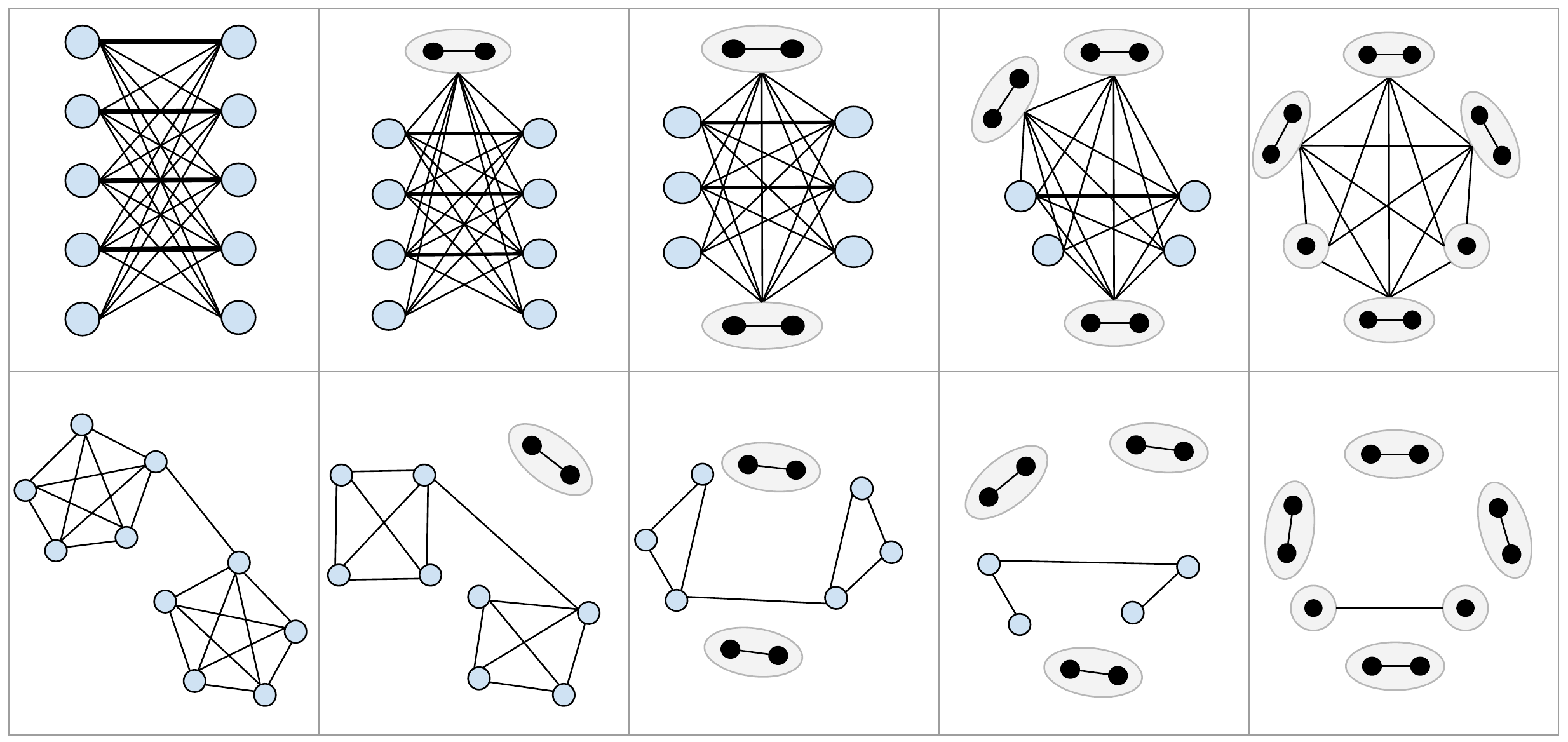}
  \caption{Top row: evolution of a minor set generated by an arbitrary non-adjacent edge set (heavy black lines) for \incbip{5}{5} (missing a single edge) . Bottom row: evolution of the complement for each minor}
  \label{fig:incomplete_bipartite_minors1}
\end{figure} 
However, the \Kn{N+1} clique minor can only be generated after $(N-1)$ edge contractions for either a complete bipartite graph \bip{N}{N} or incomplete bipartite graph \incbip{N}{N}. The robustness of the \Kn{nc+1} graph minor is given by the following result:
\begin{theorem}
\label{th: large clique minor robustness}
On the faulty \Ch{n}{n}{c} hardware, embedding the complete graph \Kn{nc+1} is not possible if the virtual hardware is missing more than $m^{\prime} = (nc)(nc+1)/2 + (nc-1)$ edges and the set of $n^{\prime}$ incomplete vertices cannot be covered by $(nc-1)$ edges.
\end{theorem}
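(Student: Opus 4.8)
Write $N=nc$. The plan is to show that either hypothesis already prevents $\Kn{N+1}$ from being a minor of the virtual hardware $\incbip{N}{N}$; since the two-step procedure can only realize minors of the virtual hardware, this is enough to conclude non-embeddability. In fact I would prove the stronger statement that each of the two conditions suffices on its own, so their conjunction certainly does.

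First I would set up a \emph{rigidity lemma} for the branch decomposition: if $\Kn{N+1}$ is a minor of a bipartite graph whose parts both have size $N$, then the $N+1$ branch sets are exactly $N-1$ two-vertex bags, each consisting of one matched left--right edge, together with one singleton bag in each part, and all $2N$ vertices are used. This is forced by two elementary observations already underlying Theorem~\ref{cor:maximum clique minor}: a branch set contained in a single part must be one vertex (there are no edges inside a part), and two singleton branch sets in the same part are never adjacent, so a complete $\Kn{N+1}$ admits at most two singletons; a vertex count (at least $N-1$ ``crossing'' bags of size $\ge 2$ together with at most two singletons must fit into the $2N$ available vertices) then pins everything down. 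Because $\incbip{N}{N}$ is a subgraph of $\bip{N}{N}$, every model of $\Kn{N+1}$ in the faulty virtual hardware is also a model in $\bip{N}{N}$, hence has this rigid form -- in particular no ``larger'' branch set can be used to absorb a missing edge.

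Next I would dispatch the covering condition. Let $v$ be an incomplete vertex, say in the left part, missing the edge to some $w$ in the right part, and suppose $v$ were a singleton bag. The only vertices adjacent to $v$ in $\incbip{N}{N}$ are right-part vertices joined to $v$ by a surviving edge, and $w$ is not one of them, while $v$ has no same-part neighbours; hence $\{v\}$ is not adjacent to the bag containing $w$, contradicting completeness of $\Kn{N+1}$. So every incomplete vertex lies in a two-vertex bag, i.e.\ is an endpoint of one of the $N-1$ matched edges of the model -- equivalently, the $n'$ incomplete vertices are covered by $N-1$ edges. If no $N-1$ edges cover the incomplete set, no such model exists. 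For the edge count, I would note that a model of $\Kn{N+1}$ consumes $N-1$ surviving edges as the internal edges of the two-vertex bags and, for each of the $\binom{N+1}{2}=N(N+1)/2$ pairs of bags, at least one further surviving edge joining that pair, and all of these are distinct (each edge of $\incbip{N}{N}$ lies inside exactly one bag or joins exactly one pair of bags). So a model needs at least $m' = N(N+1)/2+(N-1)$ surviving edges; since $\bip{N}{N}$ has $N^2$ edges and $2m'=N^2+3N-2>N^2$, deleting more than $m'$ edges leaves fewer than $N^2-m'<m'$ edges -- too few to support any model.

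The main obstacle is the rigidity lemma, since it is what rules out exotic models with large branch sets that might appear to repair faults; once it is in hand, both the covering argument and the counting argument are short. Two points deserve care in writing it up: that the ``one surviving edge per pair of bags'' quantity really is a lower bound (it is, by the distinctness observation, but one must check an internal bag-edge is never double-counted as a pair-edge), and that singleton bags -- not only the two-vertex bags -- are handled in the covering step. Finally I would state the otherwise-trivial reduction that ``not a minor of the virtual hardware'' implies ``not embeddable through the two-step process,'' since that process embeds precisely into minors of the virtual hardware.
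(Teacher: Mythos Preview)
Your proposal is correct, and in fact considerably more rigorous than what the paper itself offers. The paper does not enclose a formal proof of this theorem in a proof environment; the paragraphs following the statement are an informal discussion framed through the complement graph. For the edge-count condition the paper says only that it is ``a resource argument, that the original graph must have a sufficient number of edges to contract and still support the complete graph $K_{N+1}$''; for the covering condition it says that ``if an incomplete vertex is not contained in a larger vertex embedding, then the resulting minor will not have the maximum degree possible.'' Neither assertion is justified beyond that.

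The substantive difference is your rigidity lemma. The paper never isolates or proves the structural fact that \emph{every} model of $\Kn{N+1}$ in an $N\times N$ bipartite host must consist of exactly $N-1$ matched left--right edges together with one singleton in each part, using all $2N$ vertices. That structure is implicit in the MSC construction for the complete case (Theorem~\ref{cor:maximum clique minor}), but the paper does not argue that no other model is possible in a faulty $\incbip{N}{N}$, which is exactly what is needed here. Once the rigidity lemma is in hand, both conditions fall out by the clean counting you describe, and you in fact obtain two bonuses the paper does not claim: the disjunctive strengthening (each hypothesis alone obstructs the minor, not only their conjunction) and the sharper edge threshold that fewer than $m'$ \emph{surviving} edges already blocks $\Kn{N+1}$, whereas the stated ``more than $m'$ missing'' only gives fewer than $N^2-m'<m'$ survivors. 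So your route is genuinely different in that it replaces the paper's complement-graph heuristic and appeal to the sequential MSC construction by a direct structural analysis of an arbitrary minor model; what it buys is an actual proof where the paper has a sketch.
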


Our proof of the clique minor robustness is framed in terms of the complement graph. For a complete bipartite graph \bip{N}{N}, its complement graph will consist of two disconnected complete graphs \Kn{N}. As $\bip{N}{N} \rightarrow \incbip{N}{N}$ through the removal of edges while maintaining partition orders, the complement graph will gain edges which connect the two disconnected \Kn{N} graphs.

The first condition for a faulty Chimera graph is that it's bipartite virtual hardware cannot be missing more than $(N+1)N/2 + (c-1)$ edges. This is a resource argument, that the original graph must have a sufficient number of edges to contract and still support the complete graph \Kn{N+1}. The second requirement is that the set of incomplete vertices on the bipartite graph must be coverable by at most $(N-1)$ edges. If an incomplete vertex is not contained in a larger vertex embedding, then the resulting minor will not have the maximum degree possible. The third requirement is that throughout the sequence of edge contractions, a disconnected dimer consisting of two vertex embeddings joined by an edge must not be created in the minor complement. If such a dimer exists it cannot be resolved into two disconnected vertex sets. 
\begin{example}
\label{ex: inc bip 22}
Consider the bipartite graph formed by removing a single edge from \bip{2}{2}. The clique minor \Kn{3} cannot be formed from the resulting incomplete bipartite graph \incbip{2}{2}  because the original graph has $3$ edges, any contraction would result in a minor with $3$ vertices and $2$ edges.
\end{example}
\begin{example}
\label{ex: crown graph}
The crown graph is an incomplete bipartite graph which is formed from \bip{N}{N} by removing the $N$ edges of a perfect matching and is an example of a graph which does not contain a \Kn{N+1} minor. There are $2N$ incomplete vertices, this set cannot be covered by any combination of $(N-1)$ edges. The lack of a \Kn{N+1} minor is further verified by treewidth argument (see \cite{bodlaender1994tourist},\cite{Fomin2003}): a graph \g{G} contains \g{H} as a minor if the treewidth of \g{H} is bounded above by the treewidth of \g{G}, $\mathbf{tw}(\g{H}) \leq \mathbf{tw}(\g{G})$. The crown graph has treewidth $N-1$ and thus cannot embed \Kn{N+1} ($\mathbf{tw}(\Kn{N+1}) = N$).
\end{example}

\section{Open problems}
We close with a brief discussion of two open problems: the more full treatment of a faulty hardware graph and the full implementation of MSC embedding.  

Hard faults (inoperable qubits) can occur in a quantum annealer, which result in missing vertices and edges on the hardware graph. There are many approaches to mitigating the effects of hard faults: one can choose to consider only the largest $n^{\prime} \times n^{\prime}$ square portion of the hardware graph which lacks any faults (see \cite{klymko2014maximal}) or distort the shape of vertex bags to accommodate faults (see \cite{zaribafiyan2016systematic}).  In our preliminary discussion, we consider simply removing any hard faults prior to constructing a virtual hardware. The resulting virtual hardware of a \Ch{n}{n}{c} with faults will remain a bipartite graph, however it will be missing edges, the partition orders may not be equal, nor equal to $nc$, and its associated MSC may be larger than that of \bip{nc}{nc}. The robustness of the \Kn{N+1} clique minor was discussed in Sec.~\ref{sec:incomplete bipartite graphs} for graphs missing few edges and the construction of edge matchings needs careful consideration to maximize the order of embeddable clique minors. 

While we have identified the MSC for a virtual representation of the Chimera lattice, we have not discussed methods for solving the subsequent subgraph isomorphism problem. The latter step is necessary for choosing which member(s) of the MSC contains the input graph as a subgraph. Subgraph isomorphism is believed to be NP-complete and we may expect that solving this problem also poses a significant computational task. However, the instance of interest is strictly smaller than the original minor embedding problem. Since members of the MSC are known beforehand, unique properties of those graphs (which may be used to easily reject potential matches) can be precomputed. These properties include: order, size, degree distribution, clique number, and treewidth. Moreover, recent work from Babai \cite{babai2015graph} has shown that graph isomorphism may be computable in quasipolynomial time, a result that would have a profound implications. We defer a more detailed analysis of this step to a subsequent publication.

\section{Conclusions}
Minor embedding of the logical graph describing an input Hamiltonian presents a significant bottleneck in adiabatic quantum programming. Our aim in this work has been to reduce the difficulty of finding an embedding for a known input graph by exploring what graphs can be embedded into a complete bipartite virtual hardware. By defining the MSC, we identify \Kn{nc+1} as the largest clique which is minor embeddable into the \bip{nc}{nc} virtual hardware of the \Ch{n}{n}{c}.

We have developed a general method for constructing the MSC of a fully connected bipartite graph \bip{N}{N}. It was seen that the contraction of edges belonging to a set of $(N-1)$ non-adjacent edges constructs  all members of the MSC. This edge set could be found using a simple greedy algorithm and the method is also applicable to complete bipartite graphs \bip{N}{N^{\prime}} with $N\neq N^{\prime}$. 

For the case of an incomplete bipartite graph, the simple greedy algorithm is of limited use as the number of minors in the MSC can be very large. We focus on the largest clique minor, and determine two criteria that identify graphs which does not have a \Kn{N+1} clique minor: first, any incomplete bipartite graph which does not meet a minimum size $|E| < (N+1)^2 + (N-1)$ and second, any incomplete bipartite graph on which the set of incomplete vertices cannot be covered by $(N-1)$ edges. These two criteria are enough to determine that a graph does not have a \Kn{N+1} clique minor, but they are insufficient to determine if a graph does have a \Kn{N+1} minor. 

By identifying the MSC of an ideal Chimera hardware, we have a solution to the problem of complete graph embedding. If the graph \Kn{nc+1} can be embedded, then any graph of order $n^{\prime}<nc+1$ can be embedded, but the actual vertex map is not known. Determining the robustness of the \Kn{nc+1} clique minor on a Chimera hardware with faulty qubits is still an open ended question. 

\section{Acknowledgements}
This work was supported by the United States Department of Defense and used resources of the Computational Research and Development Programs at Oak Ridge National Laboratory. This manuscript has been authored by UT-Battelle, LLC, under Contract No. DE-AC0500OR22725 with the U.S. Department of Energy. The United States Government retains and the publisher, by accepting the article for publication, acknowledges that the United States Government retains a non-exclusive, paid-up, irrevocable, world-wide license to publish or reproduce the published form of this manuscript, or allow others to do so, for the United States Government purposes. The Department of Energy will provide public access to these results of federally sponsored research in accordance with the DOE Public Access Plan.

\bibliography{graph_minors,minor_embedding,aqc}

\end{document}